\newtheorem{lemma}{Lemma} 
\newtheorem{theorem}{Theorem}  
\newtheorem{corollary}{Corollary}
\newtheorem{definition}{Definition} 
\theoremstyle{remark}
\def\romannum{\begingroup
  \def\theenumi{\textup{(\roman{enumi})}}%
  \def\p@enumi{}%
  \def\labelenumi{\theenumi}%
  \enumerate}
\newcommand{\LL}[0]{\ensuremath{\mathsf{L}}}
\newcommand{\NP}[0]{\ensuremath{\mathsf{NP}}}
\newcommand{\NL}[0]{\ensuremath{\mathsf{NL}}}
\newcommand{\Ptime}[0]{\ensuremath{\mathsf{P}}}
\newcommand{\HC}[0]{\ensuremath{\mathrm{HC}}}
\newcommand{\DC}[0]{\ensuremath{\mathrm{DC}}}
\newcommand{\DCm}[0]{\ensuremath{\mathrm{DC}^*_m}}
\newcommand{\Cyl}[0]{\ensuremath{\mathrm{Cyl}^*}}
\newcommand{\Cylm}[0]{\ensuremath{\mathrm{Cyl}^*_m}}
\renewcommand{\phi}{\varphi}
\newcommand{\ignore}[1]{}
\newcommand{\G}{\ensuremath{\mathrm{G}}}
\newcommand{\W}{\ensuremath{\mathrm{W}}}
\newcommand{\F}{\ensuremath{\mathrm{F}}}
\renewcommand{\H}{\ensuremath{\mathrm{H}}}
\newcommand{\B}{\ensuremath{\mathrm{B}}}
\begin{document}

 
\title{Surjective ${\mathbf \H}$-Colouring over Reflexive Digraphs\thanks{An extended abstract of this article has appeared at the 35th International Symposium on Theoretical Aspects of Computer Science (STACS 2018).
}}


\author{
Beno\^it Larose\thanks{The first author was supported by NSERC and FRQNT.} \\
LACIM, Universit\'e du Qu\'ebec a Montr\'eal, Canada
\and
Barnaby Martin \\
Department of Computer Science, Durham University, U.K.
\and
Dani\"el Paulusma\thanks{The third author was supported by The Leverhulme Trust (RPG-2016-258).} \\
Department of Computer Science, Durham University, U.K.
}




\maketitle

\begin{abstract}
The {\sc Surjective $\H$-Colouring} problem is to test if a given graph allows a vertex-surjective homomorphism to a fixed 
graph~$\H$. The complexity of this problem has been well studied for undirected (partially) reflexive graphs.
We introduce
{\it endo-triviality}, the property of a 
structure that all of its endomorphisms that do not have range of size $1$ are automorphisms, 
as a means to obtain complexity-theoretic classifications of {\sc Surjective $\H$-Colouring} in the case of reflexive {\it digraphs}.
Chen [2014] proved, in the setting of constraint satisfaction problems, that {\sc Surjective $\H$-Colouring} is \NP-complete if $\H$ 
has the property that all of its polymorphisms are essentially unary. 
We give the first concrete application of his result by showing that every 
endo-trivial 
reflexive digraph~$\H$ has this property.
We then use the concept of 
endo-triviality
to prove, as our main result, a dichotomy for {\sc Surjective $\H$-Colouring} when $\H$ is a reflexive tournament: if $\H$ is transitive, then {\sc Surjective $\H$-Colouring} is in \NL, otherwise it is \NP-complete.

By combining this result with some known and new results we obtain a complexity classification for {\sc Surjective $\H$-Colouring} when $\H$ is a partially reflexive digraph of size at most $3$.

\end{abstract}





\section{Introduction}\label{s-intro}

The classical \emph{homomorphism problem}, also known as {\sc $\H$-Colouring},
involves a fixed structure~$\H$, with input another structure~$\G$, of the same signature, invoking the question as to whether there is a function from the domain of~$\G$ to the domain of $\H$ that is a homomorphism from $\G$ to $\H$. 
The {\sc $\H$-Colouring} problem is an intensively studied problem, which has additionally attracted attention in its guise of the \emph{constraint satisfaction problem} (CSP), especially since the seminal paper of Feder and
Vardi~\cite{FederVardi}. 
Their well-known conjecture, recently proved by Bulatov~\cite{FVproofBulatov} and Zhuk~\cite{FVproofZhuk}, stated that every CSP$(\H)$ has complexity either in P or \NP-complete, omitting any Ladner-like complexities in between. 

This paper concerns the computational complexity of 
the \emph{surjective homomorphism} problem, also known in the literature as {\sc Surjective $\H$-Colouring} \cite{CiE2017,GolovachPS12} and {\sc $\H$-Vertex-Compaction} \cite{VikasAlgorithmica}. This problem requires the homomorphism
to be surjective.
It is a cousin of the \emph{list homomorphism} problem and is even more closely related to the \emph{retraction} and \emph{compaction} problems. Indeed, the {\sc $\H$-Compaction} problem, hitherto defined only for graphs~$\H$, takes as input a graph $\G$ and asks if there exists a function $f$ from $V(\G)$ to $V(\H)$ so that for each non-loop edge $(x,y) \in E(\H)$ (\mbox{i.e.} with $x\neq y$), there exists $u,v \in V(\G)$ so that $f(u)=x$ and $f(v)=y$. Thus, compaction can be seen as the \emph{edge-surjective homomorphism} problem.\footnote{Except for the treatment of self-loops, which appears to be an idiosyncrasy that plays no vital role in computational complexity. For some history of the definition see \cite{VikasReflexive}.} The problem {\sc $\H$-Retraction} takes as input a superstructure~$\G$ of $\H$ and asks whether there is a homomorphism from $\G$ to $\H$ that is the identity on $\H$. 
The {\sc $\H$-Retraction} problem is polynomially equivalent with a special type of CSP, CSP$(\H')$, where $\H'$ is $\H$ decorated with constants naming the elements of its domain.
 Feder and Vardi~\cite{FederVardi} showed that
the task of classifying the complexities of the retraction problems is equivalent to that for the CSPs. 
Hence, owing to ~\cite{FVproofBulatov,FVproofZhuk}, {\sc $\H$-Retraction} has now been fully classified.

The list homomorphism problem, {\sc List $\H$-colouring}, allows one to express restricted lists for each of the input structure's elements, that are the only domain elements permitted in a solution homomorphism. {\sc List $\H$-colouring} is also a special type of CSP, CSP$(\H')$, where $\H'$ is $\H$ replete with all possible unary relations over the domain of $\H$. Historically, the complexities of {\sc List $\H$-colouring} were the first to be settled by Bulatov~\cite{ConservativeJournal}, following important earlier work on graphs \cite{FederHell98,FederHellHuang99,FHH03}. 

In contrast to the situation for {\sc $\H$-Colouring}, {\sc List $\H$-Colouring} and
{\sc $\H$-Retraction},
 the complexity classifications for {\sc $\H$-Compaction} and {\sc Surjective $\H$-Colouring} are far from settled, and there are concrete open cases (see {\sc $3$-No-Rainbow-Colouring} in the survey~\cite{SurHomSurvey}). Obtaining \NP-hardness for compaction and surjective homomorphism problems appears to be especially challenging. The complexity-theoretic relationship between these various problems is drawn in Figure~\ref{fig:Anthony}.
At present it is not known whether there is a graph $\H$ so that {\sc $\H$-Retraction},  {\sc $\H$-Compaction} and {\sc Surjective $\H$-Colouring} do not have the same complexity up to polynomial time reduction (see~\cite{CiE2017,VikasOther}).

\begin{figure}
\begin{centering}
\newcommand{\separation}{3mm}
\begin{tikzpicture}[problem/.style={draw,rectangle,rounded corners,minimum width=1cm,minimum height=0.5cm,font=\ssmall}]
\node[problem] (listHcolouring) {{\sc List $\H$-Colouring}};
\node[problem, right=\separation of listHcolouring] (Hretraction)  {{\sc $\H$-Retraction}};
\node[problem, right=\separation of Hretraction] (Hcompaction)  {{\sc $\H$-Compaction}};
\node[problem, right=\separation of Hcompaction] (surjectiveHcolouring) {{\sc Surj $\H$-Colouring}};
\node[problem, right=\separation of surjectiveHcolouring] (Hcolouring)  {{\sc $\H$-Colouring}};
\draw[->, thick] (listHcolouring) to (Hretraction);
\draw[->, thick] (Hretraction) to (Hcompaction);
\draw[->, thick] (Hcompaction) to (surjectiveHcolouring);
\draw[->, thick] (surjectiveHcolouring) to (Hcolouring);
\end{tikzpicture}
\caption{Relations between {\sc Surjective $\H$-Colouring} and its variants (from \cite{CiE2017}). An arrow from one problem to another indicates that the latter problem is polynomial-time solvable for a graph~$\H$ whenever the former is polynomial-time solvable for $\H$. Reverse arrows do not hold for the leftmost and rightmost arrows,
as witnessed by the reflexive 4-vertex cycle for the rightmost arrow and by any reflexive tree that is not a reflexive interval graph for the leftmost arrow (Feder, Hell and Huang~\cite{FHH03} showed that the only reflexive bi-arc graphs are reflexive interval graphs).
It is not known whether the reverse direction holds for the two middle arrows.}
\label{fig:Anthony}
\end{centering}	
\end{figure}
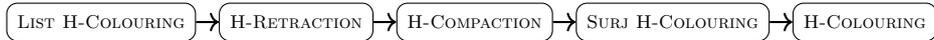

Nevertheless classification results for {\sc Surjective $\H$-Colouring} have tried to keep pace with similar ones for  {\sc $\H$-Retraction}. In \cite{pseudoforests} it is proved, among partially reflexive pseudoforests~$\H$, where the problem {\sc $\H$-Retraction} splits between P and \NP-complete. A similar classification for {\sc Surjective $\H$-Colouring} over partially reflexive forests can be inferred from the classification for partially reflexive trees in \cite{GolovachPS12}. The quest for a classification for {\sc $\H$-Compaction} and {\sc Surjective $\H$-Colouring} over pseudoforests is ongoing, but for both problems already the reflexive $4$-cycle took some time to classify \cite{JCTBMartinPaulusma,VikasReflexive}, as well as the irreflexive $6$-cycle~\cite{VikasIrreflexive,VikasMFCS}. 

The above results are for undirected graphs, whereas we focus on digraphs.
A known classification for {\sc $\H$-Retraction} comes for irreflexive semicomplete digraphs~$\H$. In \cite{Semicomplete}
Bang-Jensen, Hell, and MacGillivray proved that {\sc $\H$-Colouring} is always in \Ptime\ or is \NP-complete if $\H$ is irreflexive semicomplete. This is a fortiori a classification for {\sc $\H$-Retraction} since semicomplete digraphs are \emph{cores} (all endomorphisms are automorphisms), which ensures that {\sc $\H$-Colouring} and {\sc $\H$-Retraction} are polynomially equivalent. 
For irreflexive semicomplete digraphs $\H$, the classification for {\sc Surjective $\H$-Colouring} can be read trivially from that for {\sc $\H$-Colouring}, and they are the same. An obvious next place to look 
is at
the situation if $\H$ is {\it reflexive} semicomplete, where surely the classifications will not be the same as {\sc $\H$-Colouring} is trivial in this case. 

Reflexive tournaments form an important subclass of the class of reflexive semicomplete graphs
and are well-understood algebraically~\cite{larose2006taylor}.  
In particular,
the classification for {\sc $\H$-Retraction} where $\H$ is a reflexive tournament can be inferred from 
the algebraic characterisation from~\cite{larose2006taylor}: for a reflexive tournament~$\H$, the {\sc $\H$-Retraction} problem is in NL if $\H$ is transitive, and it is \NP-complete otherwise. This raises the question whether the same holds for {\sc Surjective $\H$-Colouring} and whether we can develop algebraic methods further to prove this.
In fact, the algebraic method is by now well known for CSPs and their relatives, including its use with digraphs; see the recent survey \cite{LaroseSurvey17}. However, the algebraic method is not so far advanced for surjective homomorphism problems.
So far it only exists in the work of Chen~\cite{ChenSurHom}, who proved that {\sc Surjective $\H$-Colouring} is \NP-complete if $\H$ 
has the property that all of its polymorphisms depend only on one variable, that is, are essentially unary. 
Chen's result has not yet been put to work (even on toy open problems) and a key driver for our research has been to find, in the wild, a place for its application.

\medskip
\noindent
{\bf Our Results.}
We give, for the first time, complexity classifications for {\sc Surjective $\H$-Colouring} for digraphs instead of undirected graphs.
To prove our results, we further develop algebraic machinery to tackle surjective homomorphism problems. That is, in Section~\ref{s-pre} we introduce, after giving the necessary terminology, the concept of 
endo-triviality. We show how this concept is closely related to some known algebraic concepts and
explore its algorithmic consequences in the remainder of our paper.

Firstly, in Section~\ref{sec:Benoit}, we prove that a reflexive digraph $\H$ that is 
endo-trivial
has the property that all of its polymorphisms are essentially unary. Combining this result with the aforementioned result of Chen~\cite{ChenSurHom} immediately yields that
{\sc Surjective $\H$-Colouring} is \NP-complete for any such digraph~$\H$. This is 
the first concrete application of Chen's result to settle a problem of open complexity;
it shows, for instance, that {\sc Surjective $\H$-Colouring} is \NP-complete if $\H$ is a reflexive directed cycle on $k\geq 3$ vertices.
As the case $k\leq 2$ is trivial, this gives a classification of {\sc Surjective $\H$-Colouring} for reflexive directed cycles, which we believe form  a natural class of digraphs to consider given the results in~\cite{JCTBMartinPaulusma,VikasMFCS}.

Secondly, in Section~\ref{s-tour} we give a complexity classification for {\sc Surjective $\H$-Colouring}, when $\H$ is a reflexive tournament.
We use endo-triviality in an elaborate and recursive encoding of an \NP-hard retraction problem within {\sc Surjective $\H$-Colouring}.  In doing this, we show that on this class, the complexities of {\sc Surjective $\H$-Colouring} and {\sc \H-Retraction} coincide.

Finally, 
our results enable us 
to give a complexity classification for {\sc Surjective $\H$-Colouring} when $\H$ is a partially reflexive digraph of size at most $3$.  In doing this, we show that on this class, the complexities of {\sc Surjective $\H$-Colouring} and {\sc \H-Retraction} coincide. We are not aware of an existing classification for {\sc \H-Retraction} on this class, but we do build on one existing for {\sc List \H-Colouring} from \cite{FederHellTucker}. 

\section{Preliminaries}\label{s-pre}

Let $[n]:=\{1,\ldots,n\}$. For a $k$-tuple $\overline{t}$ and $i \in [k]$, let $\overline{t}[i]$ be the $i$th entry in $\overline{t}$. In a digraph $\G$, a forward- (resp., backward-) \emph{neighbour} (or \emph{adjacent}) to a vertex $u \in V(\G)$ is another vertex $v \in V(\G)$ so that $(u,v) \in E(\G)$ (resp., $(v,u) \in E(\G)$). The \emph{out-degree} and \emph{in-degree} of a vertex are the number of its forward-neighbours and backward-neighbours, respectively. 
A vertex with out-degree and in-degree both 0 is said to be \emph{isolated}. A vertex with a self-loop is \emph{reflexive} and otherwise it is \emph{irreflexive}. A digraph is \emph{(ir)reflexive} if all its vertices are (ir)reflexive.

The \emph{directed path} on $k$ vertices is the digraph with vertices $u_0,\ldots, u_{k-1}$ and edges $(u_i,u_{i+1})$ for $i=0,\ldots,k-2$.
The \emph{directed cycle} on $k$ vertices is obtained from the directed path on $k$ vertices after adding the edge $(u_{k-1},u_0)$.
A digraph $\G$ is \emph{strongly connected} if for all $u,v \in V(\G)$ there is a 
directed path 
in $E(\G)$ from $u$ to $v$ (note that we take this to include the situation $u=v$, but for reflexive graphs the distinction is moot). A digraph is \emph{weakly connected} if its symmetric closure (underlying undirected graph) is connected. A \emph{double-edge} in a digraph $\G$ consists in a pair of distinct vertices $u,v \in V(\G)$, so that $(u,v),(v,u) \in E(\G)$. 
A digraph $\G$ is \emph{semicomplete} if for every two distinct vertices $u$ and $v$, at least one of $(u,v)$, $(v,u)$ belongs to $E(\G)$. 
A digraph~$\G$ is a \emph{tournament} if for every two distinct vertices $u$ and $v$, exactly one of $(u,v)$, $(v,u)$ belongs to $E(\G)$.
We demand our tournaments have more than one vertex (to rule out certain trivial cases in proofs). A reflexive tournament $\G$ is \emph{transitive} if for every triple of vertices $u,v,w$ with $(u,v), (v,w)\in E(\G)$, also $(u,w)$ belongs to $E(\G)$.
 A digraph~$\F$ is a \emph{subgraph} of a digraph $\G$ if $V(\F) \subseteq V(\G)$ and $E(\F) \subseteq E(\G)$. It is \emph{induced} if $E(\F)$ coincides with $E(\G)$ restricted to pairs containing only vertices of $V(\F)$. A \emph{subtournament} is an induced subgraph of a tournament (note that this is a fortiori a tournament). All subgraphs we consider in this paper will be induced.

A \emph{homomorphism} from a digraph~$\G$ to a digraph~$\H$ is a function $f:V(\G)\rightarrow V(\H)$ so that for all $u,v \in V(\G)$ with $(u,v) \in E(\G)$ we have $(f(u),f(v)) \in E(\H)$. We say that $f$ is \emph{(vertex)-surjective} if for every vertex $x\in V(\H)$ there exists a vertex $u\in V(\G)$ with $f(u)=x$.
Let $\H$ be a digraph. A \emph{homomorphic image} of $\H$ is a digraph $\H'$ so that there is a surjective homomorphism $h:\H \rightarrow \H'$ in which, for all $(x',y') \in E(\H')$ there exists $(x,y) \in E(\H)$ so that $x'=h(x)$ and $y'=h(y)$. That is, $h$ is vertex- and edge-surjective.  

The \emph{direct product} of two digraphs $\G$ and $\H$, denoted $\G \times \H$, has vertex set $V(\G) \times V(\H)$ and edges $((x,y),(x',y'))$ exactly when $(x,x') \in E(\G)$ and $(y,y') \in E(\H)$. 
This product is associative and commutative, up to isomorphism, and spawns a natural power.
A $k$-ary \emph{polymorphism} of $\G$ is a function $f:\G^k\rightarrow \G$ so that when $(x_1,y_1),\ldots,$ $(x_k,y_k) \in E(\G)$ then $(f(x_1,\ldots,x_k),f(y_1,\ldots,y_k)) \in E(\G)$. A polymorphism of $\G$ can be seen as a homomorphism from the $k$th (direct) power of $\G$, $\G^k$, to $\G$. A polymorphism $f$ is \emph{idempotent} if for all $x\in V(\G)$, $f(x,\ldots,x)=x$.
The $k$-ary $i$th \emph{projection}, for $i \in [k]$, is the polymorphism $\pi^i_k$ given by $\pi^i_k(x_1,\ldots,x_k)=x_i$. 
A $k$-ary operation $f$ is called \emph{essentially unary} if there exists a unary operation $g$ and $i \in [k]$ so that $f(x_1,\ldots,x_k)=g(x_i)$
for all $(x_1,\ldots,x_k)\in \G^k$.

 Let $\G$ be a digraph. An \emph{endomorphism} of~$\G$ is a homomorphism from $\G$ to itself. 
 An endomorphism~$e$ of $\G$ is a \emph{constant map} if there exists a vertex $v\in V(\G)$ such that $e(u)=v$ for all $u\in V(\G)$.
The \emph{self-map digraph} $\G^{\G}$ has as its vertices the self-maps of $V(\G)$, and there is an edge $(f,g) \in E(\G^\G)$ between self-maps $f$ and $g$ if and only if for every edge $(x,y) \in E(\G)$, we have that $(f(x),g(y)) \in E(\G)$. The \emph{endomorphism digraph} $\widehat{\G^{\G}}$ is the restriction of the self-map digraph $\G^{\G}$ to the vertices induced by endomorphisms of $\G$. Note that the self-loops of $\G^{\G}$ are precisely the endomorphisms of $\G$, so $\widehat{\G^{\G}}$ is reflexive when $\G$ is reflexive. We now make two more observations. 
The first one follows directly from the definition of $\widehat{\G^{\G}}$ as well. 
The second one can, for example, be found in Section~5.2 of~\cite{llt}.

\begin{lemma}\label{l-gg2}
If $(f_1,g_1) \in E(\widehat{\G^\G})$ and $(f_2,g_2) \in E(\widehat{\G^\G})$, then $((f_1 \circ f_2),(g_1 \circ g_2)) \in E(\widehat{\G^\G})$. 
\end{lemma}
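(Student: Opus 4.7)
The plan is a direct unpacking of the definition of the self-map digraph, applied twice in succession. Recall that $(f,g) \in E(\widehat{\G^\G})$ says two things: $f$ and $g$ are endomorphisms of $\G$, and for every edge $(x,y) \in E(\G)$ the pair $(f(x), g(y))$ is again an edge of $\G$. To prove the lemma, I need to verify both conditions for the pair $(f_1 \circ f_2, \, g_1 \circ g_2)$.

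First I would dispose of the endomorphism requirement: since composition of endomorphisms is an endomorphism (this is the standard observation that the endomorphisms of $\G$ form a monoid under composition), both $f_1 \circ f_2$ and $g_1 \circ g_2$ lie in $V(\widehat{\G^\G})$. Second, I would verify the edge condition. Pick an arbitrary edge $(x,y) \in E(\G)$. Applying the hypothesis $(f_2, g_2) \in E(\widehat{\G^\G})$ to this edge yields $(f_2(x), g_2(y)) \in E(\G)$. Now apply the hypothesis $(f_1, g_1) \in E(\widehat{\G^\G})$ to the edge $(f_2(x), g_2(y))$ to conclude $(f_1(f_2(x)), g_1(g_2(y))) \in E(\G)$, which is exactly $((f_1 \circ f_2)(x), (g_1 \circ g_2)(y)) \in E(\G)$. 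Since the edge $(x,y)$ was arbitrary, this establishes $(f_1 \circ f_2, \, g_1 \circ g_2) \in E(\widehat{\G^\G})$.

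There is essentially no obstacle: the statement is a formal consequence of the definition and the associativity of function composition, with the key move being simply the chaining of two applications of the self-map edge property. The only subtlety worth flagging explicitly in the writeup is that one must remember to check membership in $V(\widehat{\G^\G})$ (not merely $V(\G^\G)$), which is why the preliminary remark that endomorphisms are closed under composition is needed before the two-step argument.
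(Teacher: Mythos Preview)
Your proof is correct and matches the paper's approach: the paper simply states that the lemma ``follows directly from the definition of $\widehat{\G^{\G}}$'' without giving further details, and your argument is exactly the direct unpacking of that definition.
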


\begin{lemma}\label{l-gg3}
Let $\G$ and $\H$ be two digraphs. Let $\phi$ be a homomorphism from $\H\times \G$ to~$\G$.
Then the function $\psi$ defined by $\psi(x)(u)=\phi(x,u)$ for all $x\in V(\H)$, $u\in V(\G)$ 
is a homomorphism from $\H$ to $\widehat{\G^{\G}}$.
\end{lemma}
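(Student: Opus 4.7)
The plan is to unpack the two conditions implicit in the conclusion and verify each directly from the definition of $\phi$ being a homomorphism $\H\times\G\to\G$. First, I must show that for every $x\in V(\H)$, the self-map $\psi(x):V(\G)\to V(\G)$ actually lies in $V(\widehat{\G^{\G}})$, i.e.\ is an endomorphism of $\G$. Second, I must show that whenever $(x,y)\in E(\H)$, the pair $(\psi(x),\psi(y))$ satisfies the edge condition of the self-map digraph, namely that $(u,v)\in E(\G)$ implies $(\psi(x)(u),\psi(y)(v))\in E(\G)$.

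For the first point, I would fix $x\in V(\H)$ and an edge $(u,v)\in E(\G)$. Since (in the reflexive setting of the paper) $(x,x)\in E(\H)$, the pair $((x,u),(x,v))$ is an edge of the direct product $\H\times\G$. Applying the homomorphism $\phi$ gives $(\phi(x,u),\phi(x,v))\in E(\G)$, which by definition of $\psi$ reads $(\psi(x)(u),\psi(x)(v))\in E(\G)$. Thus $\psi(x)$ preserves edges of $\G$ and is an endomorphism, so $\psi(x)\in V(\widehat{\G^{\G}})$.

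For the second point, fix an edge $(x,y)\in E(\H)$ and an arbitrary edge $(u,v)\in E(\G)$. By the definition of the direct product, $((x,u),(y,v))\in E(\H\times\G)$. Again applying $\phi$ yields $(\phi(x,u),\phi(y,v))\in E(\G)$, i.e.\ $(\psi(x)(u),\psi(y)(v))\in E(\G)$. Since this holds for every edge of $\G$, the definition of $\widehat{\G^{\G}}$ gives $(\psi(x),\psi(y))\in E(\widehat{\G^{\G}})$, completing the verification.

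There is essentially no obstacle here: the statement is the standard ``currying'' isomorphism that makes $\widehat{\G^{\G}}$ an exponential object, so the proof is purely a matter of chasing definitions. The only subtlety worth flagging is the reflexivity of $\H$ (or more precisely, the existence of the loop $(x,x)\in E(\H)$), which is what guarantees that each $\psi(x)$ is an endomorphism rather than merely a self-map; in the reflexive digraph context of this paper this is automatic, so no separate hypothesis is needed.
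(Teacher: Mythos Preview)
Your proof is correct and is the standard currying argument; the paper does not give its own proof of this lemma but instead cites Section~5.2 of \cite{llt}, where essentially the same verification appears.

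One remark on the reflexivity caveat you raise: you are right that the step showing $\psi(x)$ is an \emph{endomorphism} (rather than merely a self-map) genuinely requires $(x,x)\in E(\H)$, and the lemma as literally stated in the paper does not hypothesise this. Without reflexivity of $\H$ the conclusion would only be that $\psi$ is a homomorphism $\H\to\G^{\G}$ (the full self-map digraph), not to its restriction $\widehat{\G^{\G}}$. In the paper's sole application of this lemma (the proof of Theorem~\ref{cor:Benoit}), the source digraph is $\H^{n-1}$ with $\H$ reflexive, so the hypothesis is met; but your flag is well placed, and strictly speaking the lemma statement is missing the word ``reflexive'' before $\H$.
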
 
 
A bijective endomorphism whose inverse is a homomorphism is an \emph{automorphism}.
An endomorphism is \emph{non-trivial} if it is neither an automorphism nor a constant map.
 A digraph, all of whose endomorphisms are automorphisms, is termed a \emph{core}. 
An endomorphism $e$ of a digraph~$\H$ \emph{fixes} a subset $S \subseteq V(\H)$ if $e(S)=S$, that is, $e(x)\in S$ for all $x \in S$, 
and it fixes a subgraph~$\F$ of $\H$ if $e(\F)=\F$. It fixes an induced subgraph~$\F$ \emph{up to automorphism} if $e(\F)$ is an automorphic copy of $\F$ (this is a stronger condition than $e(\F)$ being isomorphic to $\F$). 
An endomorphism~$r$ of $\G$ is a \emph{retraction} of $\G$ if $r$ is the identity on the image $r(\G)$ (thus a retraction must have at least one fixed point).

\begin{figure}
\centering
\input{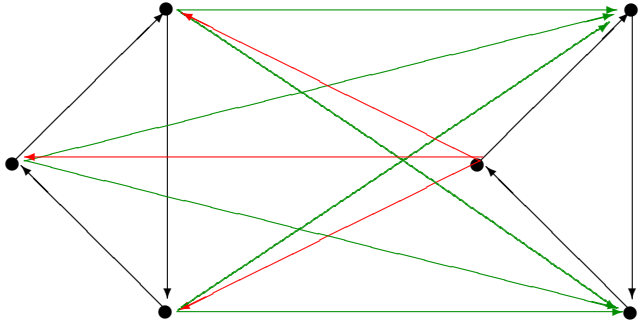}
\caption{A tournament on six vertices (self-loops are not drawn), which retracts to the directed 3-cycle (in black) on the right-hand side, but not to the one on the left-hand side (in black as well). However, there is no endomorphism that maps the left-hand one isomorphically to the right. We can use this tournament to build a structure that is a counterexample to the generalisation of Lemma~\ref{lem:endo-retraction-trivial} stating that endo-trivial and retract-trivial coincide. Let us label the vertices in the tournament: $\alpha,\beta,\gamma$ (left-hand $\DC^*_3$, clockwise from bottom) and $0,1,2$ (right-hand $\DC^*_3$, clockwise from bottom). Let us build a structure $\B$ by augmenting a new $6$-ary relation with tuples in $\{(\alpha,\beta,\gamma,0,1,2),(\alpha,\alpha,\alpha,\alpha,\beta,\gamma),(\alpha,\alpha,\alpha,\alpha,\alpha,\alpha)\}$. 
The structure~$\B$ is retract-trivial but is not endo-trivial, since it has an interesting endomorphism that takes $(\alpha,\beta,\gamma,0,1,2)$ to $(\alpha,\alpha,\alpha,\alpha,\beta,\gamma)$.}
\label{fig:Tournament6}
\end{figure}

\medskip
\noindent
{\bf Endo-triviality and Retract-triviality.}
We now define the key concept of endo-triviality and the closely related concept of retract-triviality.

\begin{definition}
A digraph is \emph{endo-trivial} if all of its endomorphisms are automorphisms or constant maps. 
\end{definition}

\noindent The concept of endo-triviality also arises from the perspective of the algebra of polymorphisms. An algebra is called minimal if its unary polynomials are either constants or the permutations (see Definition~2.14 in \cite{HobbyMcKenzie}). For reflexive digraphs,
polynomials and polymorphisms coincide. 
In other words, a reflexive digraph is endo-trivial if and only if its associated algebra of polymorphisms is minimal.

We will also need the following closely related concept.

\begin{definition}
A digraph is \emph{retract-trivial} if all of its retractions are the identity or constant maps.
\end{definition}

\noindent The concept of retract-triviality also appears in the algebraic theory but has, as far as we are aware, not been studied in a combinatorial setting. An algebra is term-minimal if the only retractions in its clone of terms are the identity and constants (see~\cite{TermMinimalAlgebras}). A reflexive digraph is retract-minimal if its associated algebra of polymorphisms is term-minimal. It follows that on reflexive digraphs, the concepts of retract-minimality and retract-triviality coincide.

We note that every endo-trivial structure is also retract-trivial. However, the reverse implication is not necessarily true: 
in Figure~\ref{fig:Tournament6} we give an example of a structure that is retract-trivial but not endo-trivial.
This example is based on a digraph but is not itself a digraph. It is also possible to construct a retract-trivial digraph that is not endo-trivial~\cite{Si17}, but
on reflexive tournaments both concepts do coincide.

\begin{lemma}
A reflexive tournament is endo-trivial if and only if it is retract-trivial.
\label{lem:endo-retraction-trivial}
\end{lemma}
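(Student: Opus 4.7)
The forward direction is immediate: every retraction is an endomorphism, and if a retraction happens to be an automorphism then its image is all of $V(\H)$, so being the identity on its image means being the identity map; constant retractions are already permitted.

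For the reverse direction, suppose $\H$ is a retract-trivial reflexive tournament and let $e$ be any endomorphism of $\H$. Writing $T := e(V(\H))$ and $F_v := e^{-1}(v)$ for each $v \in T$, my plan is to use $e$ to construct an explicit retraction of $\H$ whose image has size exactly $|T|$, and then invoke retract-triviality. The crucial structural observation, which is where the tournament hypothesis does its work, is that the fibre partition $\{F_v : v \in T\}$ is a congruence on $\H$: for distinct $v,w \in T$ and any $x \in F_v$, $y \in F_w$ we have $x \neq y$, so exactly one of $(x,y),(y,x)$ is an edge of $\H$; since $e$ carries that edge to an edge between $v$ and $w$ in $\H[T]$, and since (again by tournamenthood) there is exactly one $\H[T]$-edge between $v$ and $w$, the direction between $x$ and $y$ must match the direction between $v$ and $w$.

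With the congruence in hand, I would pick any section $s\colon T \to V(\H)$ with $s(v) \in F_v$. The congruence property is precisely what is needed to see that $s$ is a homomorphism from $\H[T]$ to $\H$. Then $r := s \circ e$ is an endomorphism of $\H$ whose image is $s(T)$, and it fixes $s(T)$ pointwise because $e(s(v)) = v$. Hence $r$ is a retraction of $\H$ onto $s(T)$, so by retract-triviality $r$ is either the identity---yielding $s(T) = V(\H)$, hence $|T| = |V(\H)|$, so $e$ is a surjective endomorphism of a finite digraph, hence an automorphism---or a constant, in which case $|s(T)| = |T| = 1$ and $e$ itself is constant.

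The main obstacle is establishing the congruence property of the fibres; this is exactly where the tournament hypothesis is essential, since in a general reflexive digraph one could have a double edge or no edge between vertices of distinct fibres and the forcing argument collapses. Once the congruence is proved, the construction of the section $s$ and the retraction $s\circ e$ is routine composition, and the remainder of the proof is an appeal to retract-triviality together with the standard fact that a surjective endomorphism of a finite digraph is an automorphism.
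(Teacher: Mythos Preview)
Your proof is correct and takes essentially the same approach as the paper: both construct the retraction $s\circ e$ (the paper writes it as $e^{-1}\circ e$, choosing $e^{-1}$ arbitrarily on each fibre, which is exactly your section $s$) and then invoke retract-triviality. Your version is more detailed in that you spell out the congruence property of the fibres to justify that $s$ is a homomorphism, whereas the paper simply asserts that $e^{-1}$ is an isomorphism because $\H$ is a tournament; you also argue directly rather than by contraposition, but the underlying construction is identical.
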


\begin{proof}
(Forwards.) Trivial.
(Backwards.) By contraposition, suppose $e$ is a non-trivial endomorphism of a reflexive tournament $\H$. Consider $e(\H)$ and build some function $e^{-1}$ from $e(\H)$ to $\H$ by choosing $e^{-1}(y)=x$ if $e(x)=y$ arbitrarily. Since $\H$ is a (reflexive) tournament, $e^{-1}$ is an isomorphism, whereupon $e^{-1} \circ e$ is the identity automorphism when restricted to some subtournament $\H_0$ of $\H$. Hence $e^{-1} \circ e$ is a non-trivial retraction of $\H$ (to $\H_0$).
\end{proof}

\section{Essential Unarity and a Dichotomy for Reflexive Directed Cycles}\label{sec:Benoit}

In this section we give the first concrete application, of which we are aware, of the aforementioned result of Chen, formally stated below.

\begin{theorem}[Corollary 3.5 in~\cite{ChenSurHom}]
Let $\H$ be a finite structure whose universe $V(\H)$ has size strictly greater than~$1$. If each
polymorphism of $\H$ is essentially unary, then {\sc Surjective $\H$-Colouring}
is \NP-complete.
\label{thm:Chen-hauptsatz}
\end{theorem}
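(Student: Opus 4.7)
Membership in $\NP$ is immediate: given an input structure $\G$ and a candidate $f\colon V(\G)\to V(\H)$, one can in polynomial time verify that $f$ preserves the relations of $\H$ and covers $V(\H)$. So the substance of the theorem is $\NP$-hardness, and my plan is to obtain it by a two-step reduction that separates the surjectivity issue from the underlying algebraic hardness of CSP.

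First, I would reduce {\sc $\H$-Colouring} to {\sc Surjective $\H$-Colouring} by the polynomial-time map $\G\mapsto \G\sqcup \H$, the disjoint union. The ``only if'' direction of the resulting equivalence ``$\G\sqcup\H$ has a surjective homomorphism to $\H$ iff $\G\to\H$'' simply restricts the surjective homomorphism to $V(\G)$; the ``if'' direction extends a hypothetical $f\colon\G\to\H$ by $\mathrm{id}_\H$ on the fresh copy of $\H$, and the resulting map is surjective by construction because the identity hits every element of $V(\H)$.

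Second, I would show that {\sc $\H$-Colouring} is itself $\NP$-hard under the essential-unarity hypothesis. Let $\H^*$ be the core of $\H$, so that {\sc $\H$-Colouring} and {\sc $\H^*$-Colouring} have the same complexity, and fix a retraction $r\colon\H\to\H^*$. The key lemma is: \emph{every idempotent polymorphism of $\H^*$ is a projection.} To prove it, take an idempotent $k$-ary polymorphism $f$ of $\H^*$ and form $\hat f(x_1,\ldots,x_k):=f(r(x_1),\ldots,r(x_k))$, which is a polymorphism of $\H$. By hypothesis $\hat f(x_1,\ldots,x_k)=g(x_i)$ for some $i\in[k]$ and some unary endomorphism $g$ of $\H$; substituting $x_1=\cdots=x_k=y\in V(\H^*)$ and using both the idempotency of $f$ and $r|_{V(\H^*)}=\mathrm{id}$ forces $g(y)=y$, whence $f=\pi^i_k$ on $\H^*$. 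The classical ``only projections'' case of the algebraic CSP dichotomy (going back to Jeavons--Cohen--Gyssens) then yields $\NP$-hardness of CSP$(\H^*)$, and so of {\sc $\H$-Colouring}.

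The main obstacle I foresee is this middle lemma: essential unarity is a hypothesis about all polymorphisms of $\H$, whereas the classical hardness machinery needs information about the idempotent polymorphisms of the core $\H^*$. The retraction-lifting construction $\hat f$ above bridges the gap cleanly, converting any would-be idempotent polymorphism on $\H^*$ into a polymorphism of $\H$ to which the hypothesis applies. Once that lemma is in hand, the disjoint-union reduction transfers the resulting hardness to {\sc Surjective $\H$-Colouring}, and combined with the trivial $\NP$ upper bound this completes the proof.
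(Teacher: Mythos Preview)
Your reduction $\G \mapsto \G \sqcup \H$ from {\sc $\H$-Colouring} to {\sc Surjective $\H$-Colouring} is correct, as is your lemma that every idempotent polymorphism of the core $\H^*$ is a projection. The gap is that the classical ``only projections'' hardness criterion needs $|V(\H^*)| \geq 2$; when the core is a single point that criterion is vacuous and {\sc $\H$-Colouring} lies trivially in~$\Ptime$, so your chain of reductions establishes nothing. You do not rule this case out, and it cannot be ruled out: the hypothesis ``every polymorphism of $\H$ is essentially unary'' is entirely compatible with $\H$ admitting a constant endomorphism. For an explicit witness take $\H$ on $\{0,1\}$ with the two ternary relations $R_2=\{(a,b,c):a+b+c\equiv 0\pmod 2\}$ and $R_3=\{(0,0,0),(1,0,0),(0,1,0),(0,0,1)\}$. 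Here $\Pol(R_2)$ is the clone of $\mathrm{GF}(2)$-linear operations $\sum_{i\in S}x_i$, and among these only those with $|S|\le 1$ preserve $R_3$; hence $\Pol(\H)$ consists of exactly the projections and the constant-$0$ map. The hypothesis holds, yet $\H^*=\{0\}$ and {\sc $\H$-Colouring} is trivial.

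This is not a corner case for the present paper. Every reflexive digraph has all constant maps among its endomorphisms, hence a one-element core, hence trivial {\sc $\H$-Colouring}. Your argument therefore covers \emph{none} of the endo-trivial reflexive digraphs to which the paper applies Chen's result via Corollary~\ref{c-main}. The paper cites rather than reproduces Chen's proof, but the moral is that a correct argument cannot route hardness through {\sc $\H$-Colouring}: the surjectivity constraint must itself carry the hardness, precisely because the underlying CSP may be trivial.
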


In order to this, we make use of the endomorphism graph and a result from M\'aroti and Z\'adori~\cite{marzad}.
Let $id_\H$ denote the identity map on a digraph~$\H$.

\begin{lemma}[Lemma 2.2 in~\cite{marzad}] Let $\H$ be a reflexive digraph. If $(id_\H,f) \in E(\widehat{\H^\H})$, where $f$ is different from
$id_\H$, then $\H$ has a
non-surjective retraction $r$ 
 such that $(id_\H,r)\in E(\widehat{\H^{\H}})$.
\label{lem:marzad}
\end{lemma}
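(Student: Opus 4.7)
My plan is to locate the retraction $r$ either among the powers of $f$ or, in the case that $f$ turns out to be an automorphism, among orbit-collapse maps derived from $\langle f\rangle$. The starting point is that, by applying Lemma~\ref{l-gg2} inductively to the edge $(id_\H, f)$ of $\widehat{\H^\H}$, one has $(id_\H, f^n) \in E(\widehat{\H^\H})$ for every $n \geq 1$. Consequently any endomorphism of $\H$ that I construct from iterates of $f$ will automatically satisfy the required edge condition with $id_\H$, and the task reduces to exhibiting a non-surjective retraction that is built from iterates of $f$.

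Because $\H$ is finite, the cyclic sub-semigroup of the endomorphism monoid of $\H$ generated by $f$ contains an idempotent: there exists $n \geq 1$ with $f^n \circ f^n = f^n$, and any such $f^n$ is automatically a retraction onto its image. If $f^n \neq id_\H$, then $f^n$ fixes its image pointwise but moves some vertex, so its image is a proper subset of $V(\H)$ and $r := f^n$ is the desired non-surjective retraction, by the observation in the previous paragraph.

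The main obstacle is the remaining case, where every idempotent power of $f$ equals $id_\H$; this forces $f$ to be an automorphism of some order $m \geq 2$. Here I would construct $r$ by orbit collapse: pick a representative $v_O \in O$ for each orbit $O$ of $\langle f\rangle$ acting on $V(\H)$, and set $r(v) = v_O$ whenever $v \in O$. Since $f \neq id_\H$, at least one orbit has size at least $2$, so $r$ is non-surjective, and $r$ is idempotent by construction. For each vertex $v$ let $s(v)$ be an integer with $r(v) = f^{s(v)}(v)$. Given an edge $(x,y) \in E(\H)$, I would first apply the automorphism $f^{s(x)}$ to obtain $(f^{s(x)}(x), f^{s(x)}(y)) = (r(x), f^{s(x)}(y)) \in E(\H)$, and then use $(id_\H, f^k) \in E(\widehat{\H^\H})$ for a suitable $k \geq 1$ with $s(x) + k \equiv s(y) \pmod{m}$ to rewrite the second coordinate as $f^{s(y)}(y) = r(y)$; this shows $(r(x), r(y)) \in E(\H)$, so $r$ is an endomorphism. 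The analogous exponent-adjustment applied directly to $(x,y)$, without first applying $f^{s(x)}$, yields $(id_\H, r) \in E(\widehat{\H^\H})$. The only arithmetic to check is that the exponents can always be aligned modulo $m$, which is routine since the size of each orbit divides $m$.
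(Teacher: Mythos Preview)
The paper does not supply its own proof of this lemma; it is quoted verbatim as Lemma~2.2 of Mar\'oti and Z\'adori~\cite{marzad} and used as a black box. So there is nothing in the paper to compare against.

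Your argument is correct. The iteration via Lemma~\ref{l-gg2} indeed gives $(id_\H,f^n)\in E(\widehat{\H^\H})$ for every $n\geq 1$, and the idempotent power $f^n$ settles the case where $f$ is not a bijection. In the automorphism case your orbit-collapse map $r$ is idempotent and non-surjective by construction, and your exponent-matching argument goes through: the only point to make explicit is that choosing $k$ with $s(x)+k\equiv s(y)\pmod m$ forces $s(x)+k\equiv s(y)\pmod{d_y}$ as well, where $d_y$ is the size of the $\langle f\rangle$-orbit of $y$, because $d_y\mid m$; this is what guarantees $f^{s(x)+k}(y)=r(y)$. The verification that $r$ is an endomorphism is genuinely needed (the edge condition $(id_\H,r)\in E(\H^\H)$ alone does not place $r$ in $\widehat{\H^\H}$), and you provide it.
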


The following lemma is crucial and will be of use in the next section as well.

\begin{lemma}  Let $\H$ be a retract-trivial reflexive digraph with at least three vertices. Then 
\begin{enumerate}
\item $\H$ has no double edge;
\item $\H$ is strongly connected; and
\item the automorphisms of $\H$ are isolated vertices in  $\widehat{\H^{\H}}$. 
\end{enumerate}
\label{lem:Benoit-lemma}
\end{lemma}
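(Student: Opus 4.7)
The plan is to prove each part by contrapositive, exhibiting a non-trivial retraction whenever the stated property fails; I proceed in the order (1), (2), (3) since the proof of (3) will use the first two.

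For part~(1), suppose $u \neq v$ satisfy $(u,v),(v,u) \in E(\H)$. Since $|V(\H)| \geq 3$ there exists a third vertex, so I retract $\H$ onto $\{u,v\}$ by setting $r(u)=u$, $r(v)=v$, and $r(x)=u$ for every other $x$. Reflexivity together with the two directions of the double edge make this a homomorphism (edges sourced at $v$ are the only non-trivial case, and they are handled by $(v,u)$), and its image $\{u,v\}$ is both a proper subset of $V(\H)$ and has two elements, so $r$ is non-trivial.

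For part~(2), suppose $\H$ is not strongly connected. If $\H$ is not even weakly connected, I retract $\H$ onto some weak component of size at least two (or, if all weak components are singletons, fold two isolated reflexive vertices together) to obtain a non-trivial retraction. Otherwise $\H$ is weakly connected and its condensation is a weakly connected DAG with at least two strongly connected components, so there is an arc $(u,v) \in E(\H)$ joining distinct components, and the DAG property guarantees that $v$ does not reach $u$. Letting $W$ be the set of vertices reachable from $v$ in $\H$ (which is closed under out-neighbours), I define $r(x) = v$ for $x \in W$ and $r(x) = u$ otherwise. Every edge then falls into one of three admissible types -- within $V(\H) \setminus W$, from $V(\H) \setminus W$ into $W$, or within $W$ -- witnessed respectively by the loop at $u$, the edge $(u,v)$, and the loop at $v$, while out-closedness excludes the fourth type (from $W$ to $V(\H)\setminus W$). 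The result is a non-trivial retraction onto $\{u,v\}$.

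For part~(3), let $\alpha \in \Aut(\H)$ and suppose $(\alpha,f) \in E(\widehat{\H^\H})$ for some endomorphism $f \neq \alpha$. Composing with the self-loop $(\alpha^{-1},\alpha^{-1})$ via Lemma~\ref{l-gg2} yields $(id_\H, \alpha^{-1}\circ f) \in E(\widehat{\H^\H})$ with $\alpha^{-1}\circ f \neq id_\H$, so Lemma~\ref{lem:marzad} produces a non-surjective retraction $r$ with $(id_\H,r) \in E(\widehat{\H^\H})$. Retract-triviality forces $r$ to be constant, say with value $c \in V(\H)$, and then reflexivity applied to the edge condition forces $(y,c) \in E(\H)$ for every $y$; that is, $c$ is a universal in-neighbour. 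By part~(1), $c$ has no non-loop out-edges; by part~(2), $\H$ is strongly connected, so $c$ must reach every vertex, forcing $V(\H) = \{c\}$ and contradicting $|V(\H)| \geq 3$. The symmetric case $(f,\alpha) \in E(\widehat{\H^\H})$ is handled identically after passing to the transpose digraph $\H^T$, which is again a retract-trivial reflexive digraph (endomorphisms and retractions of $\H$ and $\H^T$ coincide). I expect part~(2) to be the main obstacle: retracting onto a sink strongly connected component need not produce anything non-trivial (the sink could be a singleton, and larger sinks are not obvious retracts either), and the correct move is instead to retract onto a carefully chosen two-element set $\{u,v\}$ coming from an inter-component arc, using the out-closure $R^+(v)$ to partition pre-images cleanly.
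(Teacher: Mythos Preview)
Your proof is correct and follows essentially the same approach as the paper's. The only cosmetic differences are that in part~(2) you partition via out-reachability from $v$ whereas the paper uses in-reachability to $a$ (dual constructions), and in part~(3) you fold the paper's two-step argument (first constants, then general endomorphisms via Lemma~\ref{lem:marzad}) into a single pass, invoking the transpose digraph for the backward-adjacent case where the paper simply says ``dual''.
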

\begin{proof} (1) As any reflexive digraph can be retracted onto a double edge, the result follows. \\[3pt]
(2) A reflexive digraph that is not weakly connected may be retracted onto a 2-vertex digraph. So we may safely assume $\H$ is weakly connected. If~$\H$ is not strongly connected, then $\H$ has an edge $(a,b)$ such that $a$ and $b$ are in different strong components, that is, there is no directed path from $b$ to $a$. We define a retraction of $\H$ onto the subgraph induced by $\{a,b\}$ as follows: let $r(x) = a$ if there exists a directed path from $x$ to $a$ and $r(x)=b$ otherwise. 
As $r(a)=a$ and $r(b)=b$, it remains to check if $r$ is an endomorphism. Let $(x,y)$ be an edge. If $r(y)=a$, then there exists
a directed path from $y$ to~$a$, and thus a directed path from $x$ to~$a$ implying that $r(a)=a$. If $r(y)=b$, then $r(x)=a$ 
and $r(x)=b$ are both allowed.\\[3pt]
(3) We first prove that no constant map is adjacent to an automorphism in $\widehat{\H^{\H}}$.
Suppose for a contradiction that  there exists an automorphism $\sigma$ backwards-adjacent to some constant (the case of forwards-adjacent is dual). Thus without loss of generality we may assume that 
$(\sigma,c) \in E(\widehat{\H^\H})$  for some constant map $c$, say for all $u\in V(\H)$, $c(u)=v$ for some $v\in V(\H)$.
By composing sufficiently many times via Lemma~\ref{l-gg2},  we obtain that $(id_\H,c) \in E(\widehat{\H^\H})$. 
Since $\H$ is strongly connected, every vertex $u\in V(\H)$ has out-degree at least~1.
Hence $(u,v) \in E(\H)$ for all $u \in V(\H)$.  Since $v$ has out-degree at least~1, this means that $\H$ contains a 
double edge, contradicting statement~(1).

Now suppose that there exists an automorphism $\sigma$ that is backward-adjacent to some endomorphism $f \neq \sigma$ (the case of forwards-adjacent is dual). Thus, without loss of generality we have $(\sigma,f) \in E(\widehat{\H^\H})$. Apply $\sigma^{-1}$ to both sides of the edge to obtain that $(id_\H,g) \in E(\widehat{\H^\H})$ for some $g \neq id_\H$.  By the preceding lemma, and the fact that $\H$ is retract-trivial, this means $g$ is a constant map, contradicting the claim above.   \end{proof}

\noindent We use Lemma~\ref{lem:Benoit-lemma} to obtain the following structural result.

\begin{theorem} Let $\H$ be  
an endo-trivial 
reflexive digraph with at least three vertices. Then every polymorphism of  $\H$ is essentially unary.  
\label{cor:Benoit}
\end{theorem}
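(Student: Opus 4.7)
The plan is to translate the problem about polymorphisms into one about homomorphisms into $\widehat{\H^\H}$, via Lemma~\ref{l-gg3}, and then exploit the very restricted structure that endo-triviality imposes on the target. Concretely, given a $k$-ary polymorphism $f$, view it as a homomorphism $\phi:\H^{k-1}\times \H \to \H$ by treating the last coordinate separately; Lemma~\ref{l-gg3} then produces a homomorphism $\psi:\H^{k-1}\to\widehat{\H^\H}$ with $\psi(\vec{x})(u)=f(\vec{x},u)$. Endo-triviality ensures that $V(\widehat{\H^\H})$ consists only of automorphisms and constant maps, and Lemma~\ref{lem:Benoit-lemma}(3) says every automorphism is isolated (no edges to other vertices apart from its self-loop). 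On the remaining vertices, a direct check shows $(c_a,c_b)\in E(\widehat{\H^\H})$ if and only if $(a,b)\in E(\H)$, so the constant maps carry an isomorphic copy of $\H$ sitting disjointly from the automorphisms.

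Next I will verify that $\H^{k-1}$ is strongly connected. Given $\vec{a},\vec{b}$, by Lemma~\ref{lem:Benoit-lemma}(2) each coordinate admits a directed path $a_i\leadsto b_i$ in $\H$; pad each path to a common length using the self-loops available in the reflexive $\H$, then assemble the padded paths coordinate-wise into a directed path in $\H^{k-1}$. With this in hand I split into two cases according to where $\psi$ sends its vertices. If $\psi(\vec{a})=\sigma$ for some automorphism $\sigma$, then for any edge $(\vec{a},\vec{y})\in E(\H^{k-1})$ we need $(\sigma,\psi(\vec{y}))\in E(\widehat{\H^\H})$, and isolation forces $\psi(\vec{y})=\sigma$; the symmetric argument handles backward edges, and strong connectivity propagates this to give $\psi\equiv\sigma$. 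Unpacking, $f(x_1,\ldots,x_k)=\sigma(x_k)$, which is essentially unary. Otherwise $\psi$ takes only constant-map values, so $\psi(\vec{x})=c_{g(\vec{x})}$ for some $g:V(\H)^{k-1}\to V(\H)$; because the constant maps carry an isomorphic copy of $\H$, $g$ is a $(k-1)$-ary polymorphism of $\H$, and $f(\vec{x},x_k)=g(\vec{x})$ depends dummily on $x_k$.

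An induction on $k$ finishes the argument, with base case $k=1$ where endo-triviality hands us directly that any endomorphism is an automorphism or constant, both essentially unary. The main delicate point is the propagation step in the automorphism case: one must combine the isolation of automorphisms in $\widehat{\H^\H}$ with strong connectivity of the power $\H^{k-1}$, and it is precisely here that we use (1) and (2) of Lemma~\ref{lem:Benoit-lemma} together with reflexivity, rather than only (3). The other subtlety is ensuring the reduction $f(\vec{x},x_k)=g(\vec{x})$ yields a genuine polymorphism of $\H$, which is immediate from the fact that the subgraph of constant maps in $\widehat{\H^\H}$ is an isomorphic copy of $\H$.
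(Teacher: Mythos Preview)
Your proposal is correct and follows essentially the same approach as the paper: both translate an $n$-ary polymorphism via Lemma~\ref{l-gg3} into a homomorphism $\H^{n-1}\to\widehat{\H^\H}$, use endo-triviality and Lemma~\ref{lem:Benoit-lemma} to see that $\widehat{\H^\H}$ is a copy of $\H$ on the constant maps together with isolated automorphisms, and then invoke strong connectivity of $\H^{n-1}$ to force the image into a single component. The only cosmetic difference is that the paper argues by contradiction (assuming $f$ depends on all variables and deriving it cannot), whereas you phrase the same reduction as an induction on $k$; also, your reference to part~(1) of Lemma~\ref{lem:Benoit-lemma} in the propagation step is not actually needed there---only (2), (3), and reflexivity are used.
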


\begin{proof} 
Since $\H$ is endo-trivial, $\H$ is retract-trivial. Hence, 
by Lemma~\ref{lem:Benoit-lemma}, $\H$ is strongly connected, and furthermore the automorphisms of $\H$ are isolated vertices of~$\widehat{\H^{\H}}$. As $\H$ is endo-trivial, this means that  $\widehat{\H^{\H}}$  is the disjoint union of a copy of $\H$ that corresponds to the constant maps and a set of isolated vertices, one for each automorphism of $\H$.
Suppose for a contradiction that there exists an  an $n$-ary polymorphism~$f$ of $\H$ which is not essentially unary.
We may without loss of generality assume that~$f$ depends on all of its $n$ variables, where $n \geq 2$. 
By Lemma~\ref{l-gg3}, 
 the mapping $F:\H^{n-1} \rightarrow \widehat{\H^{\H}}$ defined by $F(x_1,\ldots,x_{n-1})(y) = f(x_1,\ldots,x_{n-1},y)$ is a homomorphism. Since $\H$ is strongly connected, so is $\H^{n-1}$, and hence so is the image of $F$ in $\widehat{\H^{\H}}$. Thus this image is either contained in the component of constants, in which case $f$ does not depend on its last variable, else it is a singleton, in which case $f$ does not depend on any of its first $n-1$ variables. \end{proof}

Combining Theorems~\ref{thm:Chen-hauptsatz} and~\ref{cor:Benoit} yields the main result of this section.

\begin{corollary}\label{c-main}
If $\H$ is an endo-trivial reflexive digraph on at least three vertices, then {\sc Surjective $\H$-Colouring} is \NP-complete. 
\end{corollary}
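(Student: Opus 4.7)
The plan is to derive the corollary directly by chaining the two results already established in this section. Since $\H$ is endo-trivial with $|V(\H)|\geq 3$, Theorem~\ref{cor:Benoit} applies and tells us that every polymorphism of $\H$ is essentially unary. With this structural property in hand, the hypothesis of Chen's Theorem~\ref{thm:Chen-hauptsatz} is satisfied (the only extra condition there is $|V(\H)|>1$, which follows immediately from $|V(\H)|\geq 3$), and its conclusion is precisely the \NP-completeness of {\sc Surjective $\H$-Colouring}. Membership in \NP\ is routine: guess a vertex-surjective map $V(\G)\to V(\H)$ and verify in polynomial time that it preserves edges and hits every vertex of $\H$.

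There is no genuine obstacle at this final step; the real work has already been done upstream. The chain of dependencies is worth recording: endo-triviality first forces retract-triviality, which by Lemma~\ref{lem:Benoit-lemma} rules out double edges, forces strong connectedness, and isolates the automorphisms in $\widehat{\H^{\H}}$; combined with endo-triviality this pins down the structure of $\widehat{\H^{\H}}$ as a copy of $\H$ (the constants) together with isolated automorphism vertices. Theorem~\ref{cor:Benoit} then extracts essential unarity of polymorphisms via Lemma~\ref{l-gg3} and the strong connectedness of powers of $\H$. Once essential unarity is available, Chen's theorem is a black box that delivers hardness, and the corollary falls out in one line.
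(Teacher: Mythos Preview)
Your proposal is correct and follows exactly the paper's approach: the paper states the corollary as an immediate consequence of combining Theorem~\ref{cor:Benoit} (essential unarity of all polymorphisms) with Theorem~\ref{thm:Chen-hauptsatz} (Chen's hardness criterion), which is precisely the chain you spell out. Your additional remarks on \NP\ membership and the upstream dependency chain are accurate but not needed for the corollary itself.
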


Let $\DC^*_k$ denote the reflexive directed cycle on $k$ vertices, which is readily seen to be endo-trivial. Corollary~\ref{c-main}
yields the following dichotomy for reflexive directed cycles after noting that {\sc Surjective $\DC^*_k$-Colouring} is trivial for $k\leq 2$.

\begin{corollary}
{\sc Surjective $\DC^*_k$-Colouring} is in \LL\ if $k\leq 2$ and \NP-complete if $k\geq 3$.
\end{corollary}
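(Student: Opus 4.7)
The plan is to split on $k$ and, in the nontrivial range, reduce everything to Corollary~\ref{c-main}. For $k=1$, $\DC^*_1$ is a single reflexive vertex and any nonempty input digraph admits a (trivially surjective) constant homomorphism to it, so the problem reduces to checking nonemptiness, which lies in \LL. For $k=2$, $\DC^*_2$ has two reflexive vertices with edges in both directions, so every function into it is automatically a homomorphism; thus a surjective homomorphism exists if and only if $|V(\G)|\geq 2$, again a condition testable in logspace.

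The substantive task is the case $k\geq 3$, where I would verify that $\DC^*_k$ is endo-trivial and then invoke Corollary~\ref{c-main}. Label the vertices $u_0,\ldots,u_{k-1}$ with cyclic edges $(u_i,u_{i+1\bmod k})$ together with all self-loops. Let $e$ be any endomorphism and fix $e(u_0)=u_a$. Since the only edges of $\DC^*_k$ are self-loops and forward cycle edges, each image step satisfies $e(u_{i+1})\in\{e(u_i),e(u_i)+1\}$ (indices mod $k$). Writing $e(u_i)=u_{a+s_i}$ with $s_0=0$ and increments $s_{i+1}-s_i\in\{0,1\}$, the closing edge $(u_{k-1},u_0)$ forces the total increment around the cycle to be $0$ modulo~$k$. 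Since this total is a nonnegative integer of size at most $k$, it must equal either $0$ or $k$. In the first case all increments vanish and $e$ is the constant map $u_a$; in the second all increments are $1$, so $s_i=i$ and $e$ is the rotation $u_i\mapsto u_{a+i}$, which is an automorphism. Hence $\DC^*_k$ is endo-trivial, and Corollary~\ref{c-main} delivers \NP-completeness.

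The main (essentially the only) nontrivial step is the combinatorial characterisation of endomorphisms of $\DC^*_k$; everything else is a direct application of Corollary~\ref{c-main} and elementary case analysis for $k\leq 2$. I do not anticipate any serious obstacle, since the cycle structure immediately forces the ``jumps'' of $e$ to have size $0$ or $1$, and the closure condition around the cycle leaves only the two alternatives above.
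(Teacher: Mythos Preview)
Your proposal is correct and follows exactly the paper's approach: handle $k\leq 2$ as trivial (the paper simply says {\sc Surjective $\DC^*_k$-Colouring} is trivial there), and for $k\geq 3$ observe that $\DC^*_k$ is endo-trivial and invoke Corollary~\ref{c-main}. The only difference is that the paper asserts endo-triviality of $\DC^*_k$ as ``readily seen'' without proof, whereas you spell out the increment-around-the-cycle argument; your argument is sound and is precisely the kind of verification the paper leaves to the reader.
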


It is not difficult to construct endo-trivial reflexive tournaments other than reflexive directed cycles.
In the next section though we give a combinatorial \NP-hardness proof for {\sc Surjective $\H$-Colouring} whenever $\H$ is {\it any} non-transitive reflexive tournament. As $\DC^*_3$ is such a digraph, this proof also can be used for the case $\H=\DC^*_3$.
However, it does not extend to {\sc Surjective $\DC^*_k$-Colouring} for $k\geq 4$.

\section{A Dichotomy for Reflexive Tournaments}\label{s-tour}
\label{sec:main}

In this section we prove our main result, namely a dichotomy of {\sc Surjective $\H$-Colouring} for reflexive tournaments~$\H$ by showing that transitivity is the crucial property for tractability. In the next subsections we prove that {\sc Surjective $\H$-Colouring} is \NP-complete when $\H$ is a non-transitive tournament. 

\subsection{Two Elementary Lemmas}
It is well-known that every strongly connected tournament has a directed Hamilton cycle~\cite{Camion59}.
Hence we derive the following corollary to Lemmas~\ref{lem:endo-retraction-trivial} and~\ref{lem:Benoit-lemma} Part 2.
\begin{lemma}
If $\H$ is a reflexive tournament that is endo-trivial, then $\H$ contains a directed Hamilton cycle.
\label{l-Ham}
\end{lemma}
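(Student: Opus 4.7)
The plan is a short chain from the hypotheses to Camion's theorem, using the lemmas referenced in the sentence immediately preceding the statement. First, I would establish that every endo-trivial reflexive tournament is retract-trivial. This is the easy forward direction of Lemma~\ref{lem:endo-retraction-trivial}: a retraction is in particular an endomorphism, so under endo-triviality it is either an automorphism or a constant map; an automorphic retraction must fix every vertex of its image and is therefore the identity, while a constant retraction is a constant map by definition. Thus the hypothesis passes down intact.

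Second, provided $|V(\H)|\geq 3$, I would apply part~(2) of Lemma~\ref{lem:Benoit-lemma} to conclude that $\H$ is strongly connected. Finally, I would invoke the classical theorem of Camion~\cite{Camion59}, already recalled in the paragraph just before the statement, that every strongly connected tournament admits a directed Hamilton cycle. Chaining these three steps closes the argument.

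The only delicate point concerns the small-vertex cases $|V(\H)|\in\{1,2\}$, which fall outside the scope of Lemma~\ref{lem:Benoit-lemma} part~(2). For $|V(\H)|=1$ the single loop serves as a directed Hamilton cycle; the case $|V(\H)|=2$ is degenerate since no tournament on two vertices can contain a Hamilton cycle (this would force a double edge). For the intended use of this lemma in the following subsection, the relevant situation is always $|V(\H)|\geq 3$, so I would either restrict the statement accordingly or simply dispatch the small cases by inspection. No real obstacle is expected: the lemma is essentially a convenient packaging of prior results together with Camion's theorem.
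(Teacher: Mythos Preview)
Your proposal is correct and follows exactly the paper's approach: the lemma is stated as an immediate corollary of Lemma~\ref{lem:endo-retraction-trivial} and Lemma~\ref{lem:Benoit-lemma} Part~2 together with Camion's theorem, precisely the chain you outline. Your explicit treatment of the small-vertex cases is in fact more careful than the paper itself, which leaves them implicit.
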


We will also need the following lemma.

\begin{lemma}
If $\H$ is a reflexive tournament that is endo-trivial, then any homomorphic image of $\H$ of size $1<n<|V(\H)|$ possesses a double edge.
\label{lem:under-m}
\end{lemma}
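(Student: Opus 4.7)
The plan is to argue by contradiction: suppose $\H'$ is a homomorphic image of $\H$ with $h : \H \to \H'$ vertex-surjective, $1 < n = |V(\H')| < |V(\H)|$, and $\H'$ has no double edge; I will then build a non-trivial endomorphism of $\H$, contradicting endo-triviality. The endomorphism will have the form $e := s \circ h$, where $s : \H' \to \H$ is an injective homomorphism satisfying $h \circ s = \mathrm{id}_{\H'}$ (a section of $h$).

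First I would show that $\H'$ is itself a reflexive tournament on $n$ vertices. Reflexivity is inherited from $\H$ through $h$. For any two distinct $x', y' \in V(\H')$, pick preimages $x, y \in V(\H)$ under $h$; since $\H$ is a tournament, exactly one of $(x,y), (y,x)$ lies in $E(\H)$, and $h$ carries it to an arc between $x'$ and $y'$ in $\H'$. Combined with the no-double-edge hypothesis, this gives exactly one arc between any two distinct vertices of $\H'$, so $\H'$ is a reflexive tournament.

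Next I would define $s : V(\H') \to V(\H)$ by choosing, for each $x' \in V(\H')$, an arbitrary element $s(x') \in h^{-1}(x')$; this is automatically injective. The crucial step is to verify that $s$ is a homomorphism. Let $(x', y') \in E(\H')$. Self-loops are preserved because $\H$ is reflexive. If $x' \neq y'$, then $s(x') \neq s(y')$ and in $\H$ exactly one of $(s(x'), s(y')), (s(y'), s(x'))$ belongs to $E(\H)$; if it were the latter, applying $h$ would give $(y', x') \in E(\H')$, producing a double edge in $\H'$ together with $(x', y')$, contrary to our assumption. Hence $(s(x'), s(y')) \in E(\H)$, and $s$ is a homomorphism.

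Finally, $e := s \circ h$ is an endomorphism of $\H$ whose image is $s(V(\H'))$, of size exactly $n$. Since $1 < n < |V(\H)|$, the map $e$ is neither a constant nor a bijection, so in particular not an automorphism, contradicting the endo-triviality of $\H$. The main obstacle — really the only substantive step — is the contrapositive argument that forces the ad hoc section $s$ to preserve arc directions; once $\H'$ is seen to be a reflexive tournament, the remainder reduces to counting the size of $\mathrm{Im}(e)$.
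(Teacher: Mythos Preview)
Your proof is correct and follows essentially the same approach as the paper. The paper's (much terser) argument also chooses one representative from each fibre of $h$ and observes that sending every vertex to the representative of its fibre yields a non-trivial endomorphism (in fact a retraction, as your $e = s \circ h$ is the identity on $s(V(\H'))$); your write-up simply fills in the details of why this section $s$ preserves arcs, which is exactly the point where the no-double-edge hypothesis is used.
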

\begin{proof}
Suppose $\H$ has a homomorphic image of size $1<n<|V(\H)|$ without a double edge. By looking at the equivalence classes of vertices identified in the homomorphic image, we can deduce a non-trivial retraction, namely by mapping each of the vertices in an equivalence class to any particular one of them.
\end{proof}
\begin{figure}
\centering

\input{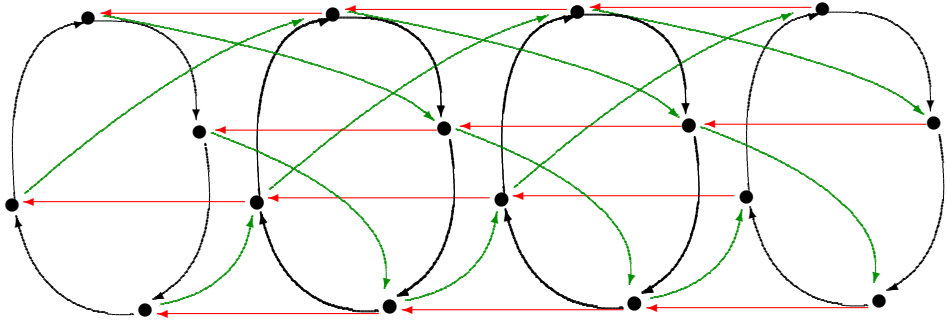}
\caption{The gadget $\Cylm$ in the case $m:=4$ (self-loops are not drawn). We usually visualise the right-hand copy of $\DC^*_4$ as the ``bottom'' copy and then we talk about vertices ``above'' and ``below'' according to the red arrows.}
\label{fig:Photo2}
\end{figure}

\subsection{The NP-Hardness Gadget}
We now introduce the gadget $\Cylm$ drawn in Figure~\ref{fig:Photo2}. We take $m$ disjoint copies of the directed $m$-cycle $\DCm$ arranged in a cylindrical fashion so that there is an edge from $i$ in the $j$th copy to $i$ in the $j+1$th copy (drawn in red), and an edge from $i$ in the $j+1$th copy to $i+1$ in the $j$th copy (drawn in green). We consider $\DCm$ to have vertices $\{1,\ldots,m\}$.
A key role will be played by Hamilton cycles $\HC_m$ in a strongly connected reflexive tournament on $m$ vertices. We consider this cycle also labelled $\{1,\ldots,m\}$, in order to attach it to the gadget $\Cylm$. 
%
The gadget $\Cylm$ is an alteration of a gadget that appears in \cite{FederHell98} for proving that {\sc List $\H$-Colouring} is \NP-complete when $\H$ is an undirected cycle on at least four vertices, but our proof is very different.

The following lemma follows from induction on the copies of $\DCm$, since a reflexive tournament has no double edges.
\begin{lemma}
In any homomorphism $h$ from $\Cylm$, with bottom cycle $\DCm$, to a reflexive tournament, if $|h(\DCm)|=1$, then $|h(\Cylm)|=1$.
\label{lem:claim1}
\end{lemma}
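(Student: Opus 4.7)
The plan is to argue by induction on the distance, measured in copies of $\DCm$, from the designated bottom copy. Denote the vertex at position $i\in\{1,\ldots,m\}$ of the $j$th copy by $(i,j)$, with $j=1$ being the bottom, and suppose $h(i,1)=v$ for some $v\in V(\H)$ and every $i$.

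For the inductive step I would assume $h(i,j)=v$ for all $i$ and show the same for copy $j+1$. Fix $i$. By construction of $\Cylm$, the red edge $(i,j)\to(i,j+1)$ forces $(v,h(i,j+1))\in E(\H)$, while the green edge $(i,j+1)\to(i+1,j)$ forces $(h(i,j+1),v)\in E(\H)$. Since $\H$ is a reflexive tournament it contains no double edge between distinct vertices, so these two conditions are compatible only if $h(i,j+1)=v$. The symmetric step -- the red edge $(i,j-1)\to(i,j)$ coupled with the green edge $(i,j)\to(i+1,j-1)$ (reindexed to the position $i$) -- handles propagation to the copy below in exactly the same way.

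Because the $m$ copies of $\DCm$ are arranged cyclically in $\Cylm$, iterating this step at most $\lceil m/2\rceil$ times shows $h\equiv v$ on the whole of $\Cylm$, hence $|h(\Cylm)|=1$. I do not anticipate a real obstacle: the crucial ingredient -- the absence of double edges in a reflexive tournament -- is immediate, and each appeal to it is purely local, involving only a single position $i$ and two adjacent copies, so the cyclic indexing modulo $m$ of both position and copy poses no difficulty.
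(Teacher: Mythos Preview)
Your argument is essentially the paper's: induct on the copies of $\DCm$, and at each step use that a reflexive tournament has no double edge to force the next copy to collapse to the same point. One small correction on the structure: the $m$ copies of $\DCm$ in $\Cylm$ are arranged \emph{linearly} (hence ``cylinder,'' with a distinguished bottom copy and a top copy), not cyclically in the $j$-direction. So the downward propagation and the $\lceil m/2\rceil$ bound are unnecessary --- starting from the bottom copy and applying your (correct) upward step $m-1$ times already covers all of $\Cylm$.
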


We will use another property, denoted \textbf{$(\dagger)$}, of $\Cylm$, which is that the retractions from $\Cylm$ to its bottom copy of $\DCm$, once propagated through the intermediate copies, induce on the top copy precisely the set of automorphisms of $\DCm$. That is, the top copy of $\DCm$ is mapped isomorphically to the bottom copy, and all such isomorphisms may be realised.
The reason is that in such a retraction, the $(j+1)$th copy may either map under the identity to the $j$th copy, or rotate one edge of the cycle clockwise, 
and $\Cylm$ consists of sufficiently many (namely $m$) copies of $\DCm$.

Now let $\H$ be a reflexive tournament that contains a subtournament~$\H_0$ on $m$ vertices that is endo-trivial.
By Lemma~\ref{l-Ham}, we find that $\H_0$ contains at least one directed Hamilton cycle $\HC_0$. 
Define $\mathrm{Spill}_m(\H[\H_0,\HC_0])$ as follows. Begin with $\H$ and add a copy of the gadget $\Cylm$, where the bottom copy of $\DCm$ is identified with $\HC_0$, to build a 
digraph~$\F(\H_0,\HC_0)$. Now ask, for some $y \in V(\H)$ whether there is a retraction $r$ of $\F(\H_0,\HC_0)$ to $\H$ so that some vertex $x$ in the top copy of $\DCm$ in $\Cylm$ is such that $r(x)=y$. Such vertices $y$ comprise the set $\mathrm{Spill}_m(\H[\H_0,\HC_0])$.

\medskip
\noindent
{\bf Remark~1.}
If $x$ belongs to some copy of $\DCm$ that is not the top copy, we can find a vertex~$x'$ in the top copy of $\DCm$ and a retraction $r'$ from $\F(\H_0,\HC_0)$ to $\H$ with $r'(x')=r(x)=y$, namely by letting $r'$ map the vertices of higher copies of~$\DCm$ to the image of their corresponding vertex in the copy that contains~$x$. In particular this implies that $\mathrm{Spill}_m(\H[\H_0,\HC_0])$ contains $V(\H_0)$.

\medskip
\noindent
We now observe that $\mathrm{Spill}_m(\H[\H_0,\HC_0])=V(\H)$ if $\H$ retracts to~$\H_0$.


\begin{lemma}
If $\H$ is a reflexive tournament that retracts to a 
subtournament~$\H_0$ with Hamilton cycle~$\HC_0$, then $\mathrm{Spill}_m(\H[\H_0,\HC_0])=V(\H)$.
\label{lem:spillage}
\end{lemma}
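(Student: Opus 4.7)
The plan is to reduce to the non-trivial case $y\in V(\H)\setminus V(\H_0)$, since $y\in V(\H_0)$ is already covered by Remark~1. For such $y$, writing $\rho\colon\H\to\H_0$ for the given retraction and $v_{i_0}:=\rho(y)$, I will construct a homomorphism $h\colon\Cylm\to\H$ that restricts to the identity on the bottom copy (identified with $\HC_0$) and satisfies $h(x)=y$ for some top vertex $x$; gluing $h$ with $\mathrm{id}_\H$ along $\HC_0$ then yields the required retraction $r\colon\F(\H_0,\HC_0)\to\H$ with $r(x)=y$.

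The first step will be to extract a \emph{twin property} for $y$: since $\rho$ is a homomorphism and $\H,\H_0$ are tournaments, for every $v_j\in V(\H_0)\setminus\{v_{i_0}\}$ the orientation of the edge between $y$ and $v_j$ in $\H$ must agree with the orientation of the edge between $v_{i_0}$ and $v_j$ in $\H_0$ (otherwise applying $\rho$ would produce two oppositely-oriented edges in the tournament $\H_0$, a contradiction). Applying this to the consecutive $\HC_0$-edges $v_{i_0-1}\to v_{i_0}\to v_{i_0+1}$ yields $v_{i_0-1}\to y$ and $y\to v_{i_0+1}$ in $\H$. The only orientation left unconstrained by $\rho$ is the one between $y$ and $v_{i_0}$ themselves, so the construction naturally splits on this edge direction.

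In Case~A, when $y\to v_{i_0}$ in $\H$, I will set $h(i,\ell):=v_i$ for all positions except $h(i_0,1):=y$. In Case~B, when $v_{i_0}\to y$ in $\H$, I will rotate the top copy by $-1$, setting $h(i,1):=v_{i-1}$ except $h(i_0+1,1):=y$, and keep $h(i,\ell):=v_i$ for $\ell\geq 2$. In either case $h$ restricts to the identity on the bottom copy as required, and the inserted $y$ lies in the top copy.

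The verification will then be a finite edge check: all edges of $\Cylm$ not incident to the single inserted vertex map into $\HC_0$ and are satisfied by cycle edges of $\HC_0$ or by loops; the handful of edges incident to the inserted $y$ use exactly the three adjacencies $v_{i_0-1}\to y$, $y\to v_{i_0+1}$ (both from the twin property), and the case-defining direction between $y$ and $v_{i_0}$. The main obstacle is the case split itself: the naive single substitution $h(i_0,1):=y$ fails in Case~B, because the red edge $(i_0,1)\to(i_0,2)$ would then demand the nonexistent edge $y\to v_{i_0}$; the $-1$-rotation of the top copy in Case~B is precisely the minimal shift that reroutes this red edge into the demand $y\to v_{i_0+1}$ instead, which does hold by the twin property.
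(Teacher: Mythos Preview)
Your proof is correct and follows essentially the same approach as the paper: both derive the two adjacencies $v_{i_0-1}\to y$ and $y\to v_{i_0+1}$ from the retraction, split on the orientation of the $y\,$--$\,v_{i_0}$ edge, and insert $y$ at a single top-copy position with a one-step rotation in the harder case. Your version is somewhat more explicit (you write down $h(i,\ell)$ directly rather than appealing to property~$(\dagger)$), and your \emph{twin property} argument cleanly justifies the two needed adjacencies that the paper simply asserts.
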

\begin{proof}
let $y \in V(\H) \setminus V(\H_0)$. We need to prove that there exists a retraction~$r$ from $\F(\H_0,\HC_0)$ to $\H$ with $r(x)=y$ for some vertex~$x$ in the top copy of $\DCm$ in $\Cylm$.
Let $h$ be a retraction of $\H$ to $\H_0$. Suppose $h(y)=i$. We observe that both $(i-1,y)$ and $(y,i+1)$ are edges of $E(\H)$. However, we might have either $(y,i)$ or $(i,y)$ and distinguish between these two cases.

First suppose that $(y,i) \in E(\H)$. Then we retract the gadget $\Cylm$ associated with $y$ in the following fashion. Using property $(\dagger)$ we turn successive copies of $\DCm$ in such a way as necessary to ensure that the vertex directly below $y$ is at position $i$; for a diagrammatic description of what means ``below'', see Figure~\ref{fig:Photo2}. Ih the top copy of $\DCm$ we map the $i$th vertex to $y$ and the $j$th vertex ($j\neq i$) to $j$ in $\H_0$.

Now suppose that $(i,y) \in E(\H)$.) Then we retract the gadget $\Cylm$ associated with $y$ in the following fashion. Using property $(\dagger)$ we turn successive copies of $\DCm$ in such a way as necessary to ensure that the vertex directly below $y$ is at position $i-1$; wherein with the last copy of $\DCm$ we map the $i$th vertex to $y$ and the $j$th vertex ($j\neq i$) to $j$ in $\H_0$.

Note that, in both cases, all of the vertices of $\Cylm$, except one, are mapped to $\H_0$.
\end{proof}

\subsection{Two Base Cases}
Recall that if $\H$ is an endo-trivial tournament, then {\sc Surjective $\H$-Colouring} is \NP-complete due to Corollary~\ref{c-main}. However $\H$ may not be endo-trivial. We will now show how to deal with the case
where $\H$ is not endo-trivial but retracts to an endo-trivial subtournament. For doing this we use the above gadget, but we need to distinguish between two different cases. 

\begin{figure}
\begin{center}
\resizebox{!}{3cm}{\includegraphics{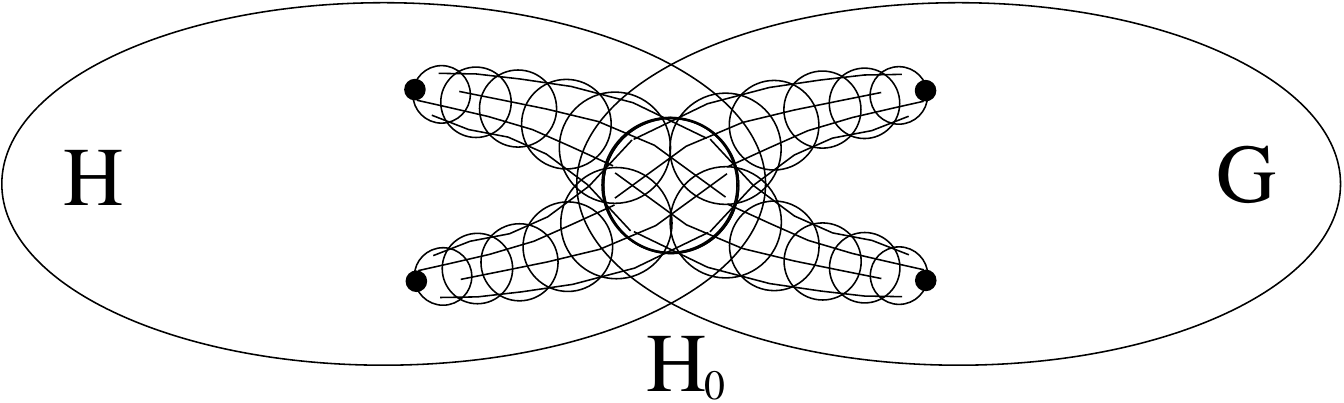}}
\end{center}
\caption{A stylised depiction of the construction in Base Case I. The central circle is the Hamilton cycle and the eccentric circles emanating thereout are the gadgets $\Cylm$.}\label{f-new}
\end{figure}

\begin{lemma}[Base Case I.]
Let $\H$ be a reflexive tournament that retracts to an endo-trivial subtournament $\H_0$ with Hamilton cycle $\HC_0$. 
Assume that $\H$ retracts to $\H'_0$ for every isomorphic copy $\H'_0=i(\H_0)$ of $\H_0$ in $\H$ with $\mathrm{Spill}_m(\H[\H'_0,i(\HC_0)])=V(\H)$.
Then $\H_0$-{\sc Retraction} can be polynomially reduced to {\sc Surjective $\H$-Colouring}.
\label{prop:main1}
\end{lemma}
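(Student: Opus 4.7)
The plan is to reduce an instance $\G$ of $\H_0$-{\sc Retraction} (so $\H_0\subseteq\G$) to an instance $\G'$ of {\sc Surjective $\H$-Colouring} by augmenting $\G$ as follows: for each $y\in V(\H)\setminus V(\H_0)$, attach a fresh copy $C_y$ of the cylinder gadget $\Cylm$, identifying its bottom copy of $\DCm$ vertex-by-vertex with $\HC_0\subseteq\H_0\subseteq\G$. The claim will be that $\G$ retracts to $\H_0$ iff $\G'$ admits a vertex-surjective homomorphism to $\H$.

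For the forward direction, start with a retraction $r\colon\G\to\H_0$. Since $\H$ retracts to $\H_0$, Lemma~\ref{lem:spillage} yields $\mathrm{Spill}_m(\H[\H_0,\HC_0])=V(\H)$, so for each $y\in V(\H)\setminus V(\H_0)$ there is a retraction of $\F(\H_0,\HC_0)$ to $\H$ sending some vertex of the top copy of $\DCm$ to $y$. Restrict each such retraction to its gadget and paste the restrictions onto $r$; agreement on $\HC_0$ holds because both pieces act as the identity on $\HC_0\subseteq\H_0$. The resulting map $h\colon\G'\to\H$ has image containing $V(\H_0)$ via $r$ and containing each $y\in V(\H)\setminus V(\H_0)$ via the corresponding $C_y$, hence is surjective.

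For the backward direction, let $h\colon\G'\to\H$ be a surjective homomorphism and set $\H_0':=h(\H_0)$. Endo-triviality of $\H_0$ together with Lemma~\ref{lem:under-m} and the absence of double edges in the reflexive tournament $\H$ force $|\H_0'|\in\{1,m\}$. In the main case $|\H_0'|=m$, the restriction $h|_{\H_0}$ is an isomorphism onto $\H_0'$ sending $\HC_0$ to a Hamilton cycle $\HC_0'$ of $\H_0'$. Each restriction $h|_{C_y}$, viewed through the isomorphism $h|_{\HC_0}\colon\HC_0\to\HC_0'$, is the gadget part of a retraction $\F(\H_0',\HC_0')\to\H$, so $h(C_y)\subseteq\mathrm{Spill}_m(\H[\H_0',\HC_0'])$. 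Combined with surjectivity of $h$ (and some auxiliary $\G$-independent spill-gadgetry added to $\G'$ to take over any slack that would otherwise fall on $h(\G\setminus\H_0)$), this upgrades to $\mathrm{Spill}_m(\H[\H_0',\HC_0'])=V(\H)$. The hypothesis of the lemma then produces a retraction $r'\colon\H\to\H_0'$, and $(h|_{\H_0})^{-1}\circ r'\circ h|_{\G}\colon\G\to\H_0$ is the sought retraction: on $\H_0\subseteq\G$ it collapses to $(h|_{\H_0})^{-1}\circ h|_{\H_0}=\mathrm{id}_{\H_0}$, because $r'$ fixes $\H_0'$ pointwise.

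The principal obstacle sits entirely in the backward direction and is twofold: ruling out the collapse case $|\H_0'|=1$ (where Lemma~\ref{lem:claim1} contracts every $C_y$ to a single vertex of $\H$ and dumps the whole surjectivity burden onto $h|_{\G}$) and promoting the containment $h(\text{gadgets})\subseteq\mathrm{Spill}_m(\H[\H_0',\HC_0'])$ to the equality demanded by the hypothesis. Both are addressed by bolting onto $\G'$ a further batch of cylinder towers, disjoint from $\G$ and rooted at their own auxiliary copy of $\H_0$, engineered so that a constant $h|_{\H_0}$ is incompatible with surjectivity on the auxiliary part while any surjective $h$ is forced to realise the full spillage on $\H_0'$; this auxiliary slack, and not the Case~B algebra itself, is what I expect to be the technical heart of the argument.
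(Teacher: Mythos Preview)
Your overall architecture is right, but the construction as written has a real gap in the backward direction, and your proposed patch does not close it.

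You attach only $|V(\H)\setminus V(\H_0)|$ many cylinders, one per target vertex $y$; the vertices of $\G\setminus\H_0$ sit in no cylinder at all. Consequently, in the collapse case $|h(\H_0)|=1$, Lemma~\ref{lem:claim1} kills the cylinders but says nothing about $h|_{\G}$, which may still surject onto $\H$ without $\G$ retracting to $\H_0$. Likewise, in the case $|h(\H_0)|=m$, surjectivity witnesses may lie in $\G\setminus\H_0$, outside every cylinder, so you cannot conclude $\mathrm{Spill}_m(\H[\H'_0,\HC'_0])=V(\H)$. Your suggested fix, a batch of auxiliary towers ``disjoint from $\G$ and rooted at their own auxiliary copy of $\H_0$'', cannot help: on a disjoint component $h$ is unconstrained by $h|_{\G}$, so one can collapse the $\G$-side to a point and be surjective on the auxiliary side, and no retraction of $\G$ is recoverable.

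The paper's resolution is not to add a separate auxiliary block but to change where the cylinders are attached. It first forms $\G'=\G\cup\H$ (glued along $\H_0$) and then attaches a fresh copy of $\Cylm$, bottom identified with $\HC_0$, to \emph{every} vertex $v\in V(\G')\setminus V(\H_0)$, with $v$ placed on the top cycle. Now every vertex of the resulting $\G''$ lies on some copy of $\DCm$. This single move solves both problems at once: if $|h(\H_0)|=1$ then all cylinders collapse and hence $|h(\G'')|=1$, contradicting surjectivity; and for any $y\in V(\H)$ a preimage $x\in\G''$ necessarily sits on some $\DCm$, so (via the obvious transport along $i$) one directly witnesses $y\in\mathrm{Spill}_m(\H[\H'_0,i(\HC_0)])$, giving full spillage and hence, by hypothesis, a retraction $r\colon\H\to\H'_0$. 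The embedded copy of $\H$ also makes the forward direction immediate (map it by the identity). In short, the ``technical heart'' you anticipated is simply: attach a cylinder to every non-$\H_0$ vertex, all rooted at the same $\HC_0$.
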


\begin{proof}
Let $\G$ be an instance of $\H_0$-{\sc Retraction}.
We build an instance~$\G''$ of {\sc Surjective $\H$-Colouring}
in the following fashion. First, take a copy of $\H$ together with $\G$ and build $\G'$ by identifying these on the copy of $\H_0$ that they both possess as a subgraph.
Let $m$ be the size of $\H_0$ and consider its Hamilton cycle $\HC_0$. We build $\G''$ from $\G'$ by augmenting a new copy of $\Cylm$ for every vertex
$v \in V(\G') \setminus V(\H_0)$. 
Vertex~$v$ is to be identified with any vertex in the top copy of $\DCm$ in $\Cylm$ and the bottom copy of $\DCm$ is to be identified with $\HC_0$ in $\H_0$ according to the identity function.
See Figure~\ref{f-new} for an example.
We claim that $\G$ retracts to $\H_0$ if and only if there exists a surjective homomorphism from $\G''$  to $\H$.

First suppose that $\G$ retracts to $\H_0$.
Let $h$ be a retraction from $\G$ to $\H_0$. We extend $h$ as follows. First we map the copy of $\H$ in $\G''$ to itself in $\H$ by the identity. This will ensure surjectivity. We then map the various copies of $\Cylm$ in $\G''$. This is always possible: because $\H$ retracts to $\H_0$, we have $\mathrm{Spill}_m(\H[\H_0,\HC_0])=V(\H)$ due to Lemma~\ref{lem:spillage}. Hence, if $h(x)=y$ for two vertices $x\in V(\G')\setminus V(\H_0)$ and $y\in V(\H)$, we can always find a retraction of the graph $\F(\H_0,\HC_0)$ to $\H$ that maps $x$ to $y$, and we mimic this retraction on the corresponding subgraph in $\G''$. The crucial observation is that this can be done independently for each vertex in $V(\G')\setminus V(\H_0)$, as two vertices of different copies of $\Cylm$ are only adjacent if they both belong to $\G'$.
This leads to a surjective homomorphism from $\G''$ to $\H$.

Now suppose that there exists a surjective homomorphism~$h$ from $\G''$  to $\H$.
If $|h(\H_0)|=1$, then by Lemma~\ref{lem:claim1}, $|h(\Cylm)|=1$ for all copies of $\Cylm$ in $\G''$. This means that $|h(\G'')|=1$ and $h$ is not surjective, a contradiction. Now, $1<|h(\H_0)|<m$ is not possible either due to Lemma~\ref{lem:under-m}. Thus, $|h(\H_0)|=m$ and indeed $h$ maps $\H_0$ to a copy of itself in $\H$ which we will call $\H'_0=i(\H_0)$ for some isomorphism~$i$. 

We claim that $\mathrm{Spill}_m(\H[\H'_0,i(\HC_0)]) = V(\H)$. In order to see this, consider a vertex $y\in V(\H)$. 
As $h$ is surjective, there exists a vertex~$x\in V(\G'')$ with $h(x)=y$. By construction, $x$ belongs to some copy of $\DCm$, and thus also belongs to some copy of $\DCm$ in $\F(\H_0,\HC_0)$.
We can extend $i^{-1}$ to an isomorphism from the copy of $\Cylm$  (which has $i(\HC_0)$ as its bottom cycle)  in the graph $\F(\H_0',i(\HC_0))$ to the copy of $\Cylm$ (which has $\HC_0$ as its bottom cycle) in the graph $\F(\H_0,\HC_0)$.
We define a mapping~$r^*$ from $\F(\H_0',i(\HC_0))$ to $\H$ by $r^*(u)=h\circ i^{-1}(u)$ if $u$ is on the copy of $\Cylm$ in $\F(\H_0',i(\HC_0))$ and $r^*(u)=u$ otherwise. We observe that $r^*(u)=u$ if $u\in V(\H_0')$ as $h$ coincides with $i$ on $\H_0$. As $\H_0$ separates the other vertices of 
the copy of $\Cylm$ from $V(\H)\setminus V(\H_0)$, in the sense that removing $\H_0$ would disconnect them, this means that $r^*$ is a retraction from $\F(\H_0',i(\HC_0))$ to $\H$. We find that $r^*$ maps $i(x)$ to $h\circ i^{-1}(i(x))=h(x)=y$. Moreover, as $x$ is in some copy of $\DCm$ in $\F(\H_0,\HC_0)$, we have that $i(x)$ is in some copy of $\DCm$ in $\F(\H'_0,i(\HC_0))$. We may assume without loss of generality that $i(x)$ belongs to the top copy (cf. Remark~1). We conclude that $y$ always belongs to  $\mathrm{Spill}_m(\H[\H'_0,i(\HC_0)])$ (cf. Remark~1).

As $\mathrm{Spill}_m(\H[\H'_0,i(\HC_0)]) = V(\H)$, we find, by assumption of the lemma, that there exists a retraction $r$ from $\H$ to $\H'_0$.  
Now $i^{-1} \circ r \circ h$ ia the desired retraction of $\G$ to~$\H_0$.  
\end{proof}

\begin{figure}
\centering
\input{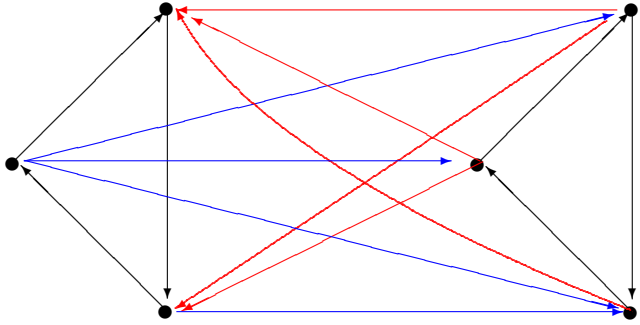}
\caption{An interesting tournament $\H$ on six vertices (self-loops are not drawn). This tournament does not retract to the $\DC^*_3$ on the left-hand side, yet $\mathrm{Spill}_3(\H[\DC^*_3,\DC_3]) = V(\H)$.}
\label{fig:weird-example}
\end{figure}


We now need to deal with the situation in which we have an isomorphic copy $\H'_0=i(\H_0)$ of $\H_0$ in $\H$ with $\mathrm{Spill}_m(\H[\H'_0,i(\HC_0)])=V(\H)$, such that $\H$ does not retract to $\H'_0$ (see Figure~\ref{fig:weird-example} for an example). We cannot deal with this case in a direct matter and first show another base case. For this we need the following lemma and an extension of endo-triviality that we discuss afterwards.

\begin{lemma}
Let $\H$ be a reflexive tournament, containing a subtournament $\H_0$ so that any endomorphism of $\H$ that fixes $\H_0$ is an automorphism. Then any endomorphism of $\H$ that maps $\H_0$ to an isomorphic copy $\H'_0=i(\H_0)$ of itself is an automorphism of $\H$. 
\label{lem:7}
\end{lemma}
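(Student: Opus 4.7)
The plan is to reduce to the lemma's hypothesis by producing, from $e$, an idempotent endomorphism of $\H$ that fixes $\H_0$ setwise. The construction borrows the ``section'' trick from the proof of Lemma~\ref{lem:endo-retraction-trivial}, exploiting the fact that in a reflexive tournament no edge can be reversed without creating a forbidden double edge.

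Given $e:\H\to\H$ with $e(\H_0)=\H'_0=i(\H_0)$, the restriction $e|_{\H_0}$ is a bijection, and hence an isomorphism, onto $\H'_0$. I would then define a section $s:e(\H)\to V(\H)$ of $e$ by setting $s(y):=(e|_{\H_0})^{-1}(y)$ for $y\in V(\H'_0)$ (so that $s$ maps $V(\H'_0)$ bijectively onto $V(\H_0)$), and by choosing $s(y)$ to be any preimage of $y$ under $e$ for $y\in e(\H)\setminus V(\H'_0)$. Exactly as in the proof of Lemma~\ref{lem:endo-retraction-trivial}, this $s$ is injective and is a homomorphism from the induced subgraph on $e(\H)$ into $\H$: if $(y_1,y_2)\in E(\H)$ with $y_1,y_2\in e(\H)$ then a reversed edge $(s(y_2),s(y_1))\in E(\H)$ would, under $e$, yield both $(y_2,y_1)$ and $(y_1,y_2)$ in $E(\H)$, contradicting that $\H$ is a tournament. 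This is really the only step that uses something beyond composition, and I expect it to be the main technical point.

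Now set $\rho:=s\circ e$. Then $\rho$ is an endomorphism of $\H$, and it is idempotent because $e\circ s$ is the identity on $e(\H)$; thus $\rho$ is a retraction onto its image $s(e(\H))$. By the choice of $s$ on $V(\H'_0)$,
\[\rho(\H_0)=s(e(\H_0))=s(\H'_0)=\H_0,\]
so $\rho$ fixes $\H_0$. By the lemma's hypothesis $\rho$ is therefore an automorphism of $\H$; but a retraction that is also an automorphism must be the identity on $V(\H)$. Hence $s\circ e=id_\H$, which forces $e$ to be injective. An injective endomorphism of a finite reflexive tournament is an automorphism---again because a bijection that reversed a single edge would immediately produce a double edge---and this will complete the proof.
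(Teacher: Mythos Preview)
Your argument is correct and follows essentially the same route as the paper: both construct a section $s$ of $e$ on $e(\H)$ that restricts to $(e|_{\H_0})^{-1}$ on $\H'_0$, use the tournament property to show $s$ is a homomorphism, and then apply the hypothesis to $s\circ e$, which fixes $\H_0$. The paper phrases this as a proof by contradiction (if $e$ were not an automorphism then $s\circ e$ would fix $\H_0$ yet fail to be an automorphism), whereas you give the direct version and make the idempotence of $s\circ e$ explicit; the substance is identical.
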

\begin{proof}
For contradiction, suppose there is an endomorphism $h$ that maps $\H_0$ to an isomorphic copy $\H'_0=i(\H_0)$ of itself that is not an automorphism of $\H$. In particular, $|h(\H)|<|V(\H)|$. Choose $h^{-1}$ in the following fashion. We let $h^{-1}$ of $h(\H_0)$ be the natural isomorphism of $h(\H_0)$ to $\H_0$ (that inverts the isomorphism given by $h$ from $\H_0$ to $\H'_0$). Otherwise we choose $h^{-1}$ arbitrarily, such that $h^{-1}(y)=x$ only if $h(x)=y$. Since $\H$ is a reflexive tournament, containing precisely one edge between distinct vertices, $h^{-1}$ is an isomorphism. Moreover, $h^{-1} \circ h$ is an endomorphism of $\H$ that fixes $\H_0$ and that is not an automorphism, a contradiction.
\end{proof}

Let $\H_0$ be an induced subgraph of a digraph $\H$. We say that the pair $(\H,\H_0)$ is \emph{endo-trivial} if all endomorphisms of $\H$ that fix $\H_0$ are automorphisms.

\begin{lemma}[Base Case II]
Let $\H$ be a reflexive tournament with a subtournament~$\H_0$ with Hamilton cycle $\HC_0$ so that $(\H,\H_0)$ and $\H_0$ are endo-trivial and $\mathrm{Spill}_m(\H[\H_0,\HC_0]) = V(\H)$. 
Then $\H$-{\sc Retraction} can be polynomially reduced to {\sc Surjective $\H$-Colouring}.
\label{prop:main2}
\end{lemma}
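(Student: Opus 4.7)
The plan is to mirror the construction from Lemma~\ref{prop:main1}, but to reduce $\H$-Retraction directly (rather than $\H_0$-Retraction). Given an instance $\G$ of $\H$-Retraction---so $\G$ contains $\H$, and hence also $\H_0$ with its Hamilton cycle $\HC_0$, as an induced subgraph---I build $\G''$ by augmenting, for every $v\in V(\G)\setminus V(\H_0)$, a fresh copy of the gadget $\Cylm$, with $v$ identified to a vertex in the top copy of $\DCm$ and with the bottom copy of $\DCm$ identified with $\HC_0$ via the identity. I claim that $\G$ retracts to $\H$ if and only if $\G''$ admits a surjective homomorphism to $\H$.

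For the forward direction, I take a retraction $h\colon\G\to\H$ (so $h$ is the identity on $\H$, hence on $\HC_0$) and extend it gadget by gadget. Since the gadgets are pairwise disjoint outside $\HC_0$ and share with $\G$ only their top vertex and the cycle $\HC_0$, these extensions can be chosen independently. For a gadget attached at $v$ with $h(v)=y\in V(\H)$, the hypothesis $\mathrm{Spill}_m(\H[\H_0,\HC_0])=V(\H)$, together with property $(\dagger)$ (as exploited in Remark~1), supplies a retraction of $\F(\H_0,\HC_0)$ to $\H$ that sends the top-cycle vertex corresponding to $v$ to $y$, while restricting to the identity on $\HC_0$. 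These combine to a homomorphism $\G''\to\H$ that is surjective because it is the identity on $\H\subseteq\G$.

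For the backward direction, let $h\colon\G''\to\H$ be surjective and consider $|h(\H_0)|$. Since $\H_0$ is endo-trivial and $\H$ is a reflexive tournament (so has no double edges), Lemma~\ref{lem:under-m} rules out $1<|h(\H_0)|<m$. The case $|h(\H_0)|=1$ is excluded by Lemma~\ref{lem:claim1}: it would collapse every attached $\Cylm$ to a single vertex, and since every vertex of $V(\G)\setminus V(\H_0)$ lies on such a gadget, $h$ would be constant on $\G''$, contradicting surjectivity. Hence $|h(\H_0)|=m$, so $h$ maps $\H_0$ isomorphically onto a copy $\H'_0=i(\H_0)\subseteq\H$. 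Then $h|_{\H}$ is an endomorphism of $\H$ that sends $\H_0$ to an isomorphic copy, so by Lemma~\ref{lem:7} (applied to the endo-trivial pair $(\H,\H_0)$) it is an automorphism $\sigma$ of $\H$. Finally, $\sigma^{-1}\circ h$ restricted to $\G$ is a homomorphism $\G\to\H$ equal to the identity on $\H$, that is, the desired retraction of $\G$ to~$\H$.

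The main obstacle, compared with Base Case~I, is the concluding step in the backward direction: there one could invoke an assumed retraction $\H\to\H'_0$, but here that hypothesis is absent (cf.\ the example in Figure~\ref{fig:weird-example}). In its place the endo-triviality of the pair $(\H,\H_0)$ must be leveraged, via Lemma~\ref{lem:7}, to upgrade the endomorphism $h|_{\H}$ to an automorphism. A secondary technical point in the forward direction is that the spill condition together with property $(\dagger)$ is needed to prescribe \emph{which} top vertex of $\Cylm$ receives the value $y$, so that the chosen extension agrees with $h(v)$ at the top vertex identified with $v$.
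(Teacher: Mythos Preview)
Your proposal is correct and follows essentially the same approach as the paper: the construction (attaching a copy of $\Cylm$ at each $v\in V(\G)\setminus V(\H_0)$), the forward direction via the spill hypothesis, and the backward case analysis on $|h(\H_0)|$ culminating in Lemma~\ref{lem:7} to upgrade $h|_{\H}$ to an automorphism, are all exactly what the paper does. Your added remarks about why Lemma~\ref{lem:7} replaces the retraction hypothesis of Base Case~I, and about the need to control \emph{which} top vertex of $\Cylm$ receives the value $y$, are accurate commentary that the paper leaves implicit.
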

\begin{proof}
Let $\G$ be an instance of {\sc $\H$-Retraction}. 
We build an instance $\G'$ of {\sc Surjective $\H$-Colouring}
in the following fashion. First we build $\G'$ from $\G$ by augmenting a new copy of $\Cylm$ for every vertex $v \in V(\G) \setminus V(\H_0)$. Vertex~$v$ is to be identified with any vertex in the top copy of $\DCm$ in $\Cylm$ and the bottom copy of $\DCm$ is to be identified with $\HC_0$ in $\H_0$ according to the identity function. We claim that $\G$ retracts to $\H$ if and only if there exists a surjective homomorphism from $\G'$ to $\H$.

First suppose $\G$ retracts to $\H$. Let $r$ be a retraction from $\G$ to $\H$. Then any extension of~$r$ from $\G$ to $\G'$ is surjective. As $\mathrm{Spill}_m(\H[\H_0,\HC_0]) = V(\H)$ and two vertices in different copies of $\Cylm$ are only adjacent if both of them are in $\G$, we can in fact extend $r$ to a surjective homomorphism from $\G'$ to $\H$.

Now suppose there exists a surjective homomorphism~$h$ from $\G'$ to $\H$.
If $|h(\H_0)|=1$, then Lemma~\ref{lem:claim1} tells us that  $|h(\Cylm)|=1$ for all copies of $\Cylm$ in $\G'$, and then we derive $|h(\G')|=1$, contradicting the surjectivity of~$h$.
Moreover, $1<|h(\H_0)|<m$ is not possible either due to Lemma~\ref{lem:under-m}. Thus, $|h(\H_0)|=m$ and $h$ maps $\H_0$ to a copy of itself. 
As $(\H,\H_0)$ is endo-trivial, Lemma~\ref{lem:7} tells us that  the restriction of $h$ to $\H$ 
is an automorphism of $\H$, which we call $\alpha$. The required retraction from $\G$ to $\H$ is now given by $\alpha^{-1} \circ h$.
\end{proof}

\subsection{Generalising the Base Cases}\label{s-general}

We now generalise the two base cases to more general cases via some recursive procedure. Afterwards we will show how to combine these two cases to complete our proof.
We will first need a slightly generalised version of Lemma~\ref{lem:7}, which nonetheless has virtually the same proof.

\begin{lemma}
Let $\H_2 \supset \H_1 \supset H_0$ be a sequence of strongly connected reflexive tournaments, each one a subtournament of the one before. Suppose that any endomorphism of $\H_1$ that fixes $\H_0$ is an automorphism. Then any endomorphism $h$ of $\H_2$ that maps $\H_0$ to an isomorphic copy $\H'_0=i(\H_0)$ of itself also gives an isomorphic copy of $\H_1$ in $h(\H_1)$. 
\label{lem:8}
\end{lemma}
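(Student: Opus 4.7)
The plan is to adapt the proof of Lemma~\ref{lem:7} to the nested setting via essentially the same contrapositive trick. Suppose for contradiction that $h(\H_1)$ is not an isomorphic copy of $\H_1$. Because $\H_2$ and each of its induced subtournaments are reflexive tournaments---in which between any two distinct vertices there lies exactly one directed edge---any homomorphism between two such tournaments of equal cardinality is automatically bijective and edge-preserving in both directions, hence an isomorphism. So our assumption forces $|h(\H_1)|<|V(\H_1)|$; in particular $h$ is not injective on $V(\H_1)$.

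Next I would construct an auxiliary map $h^{-1}:h(\H_1)\to V(\H_1)$ as follows. Writing $i$ for the isomorphism $h|_{\H_0}:\H_0\to\H'_0$, I set $h^{-1}:=i^{-1}$ on $\H'_0=h(\H_0)$; for each $y\in h(\H_1)\setminus\H'_0$ I pick an arbitrary $x\in V(\H_1)$ with $h(x)=y$, noting that such $x$ necessarily lies in $V(\H_1)\setminus V(\H_0)$ (else $y\in h(\H_0)=\H'_0$). By construction $h(h^{-1}(y))=y$ throughout, so $h^{-1}$ is injective.

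The key verification is that $h^{-1}$ is in fact a homomorphism from $h(\H_1)$ to $\H_1$. Take an edge $(y_1,y_2)\in E(h(\H_1))$ with $y_1\neq y_2$, and let $x_j:=h^{-1}(y_j)$; then $x_1\neq x_2$ by injectivity, and the tournament structure of $\H_1$ forces exactly one of $(x_1,x_2),(x_2,x_1)$ into $E(\H_1)$. If it were $(x_2,x_1)$, applying $h$ would place $(y_2,y_1)$ in $E(\H_2)$ alongside $(y_1,y_2)$, creating a double edge in the tournament $\H_2$---a contradiction. Hence $(x_1,x_2)\in E(\H_1)$, as required.

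The composition $h^{-1}\circ h$ is therefore an endomorphism of $\H_1$. For $x\in V(\H_0)$ we have $h^{-1}(h(x))=i^{-1}(i(x))=x$, so this endomorphism fixes $\H_0$ pointwise. The standing hypothesis on $\H_1$ then forces $h^{-1}\circ h$ to be an automorphism of $\H_1$, in particular bijective; yet its image has cardinality at most $|h(\H_1)|<|V(\H_1)|$, the desired contradiction. I expect the main delicacy to be the homomorphism check on $h^{-1}$, which crucially exploits that $\H_2$ is a reflexive tournament and hence admits no double edges between distinct vertices.
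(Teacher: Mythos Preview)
Your proof is correct and follows essentially the same approach as the paper's: assume $|h(\H_1)|<|V(\H_1)|$, build a one-sided inverse $h^{-1}$ that restricts to $i^{-1}$ on $h(\H_0)$, and obtain an endomorphism $h^{-1}\circ h$ of $\H_1$ fixing $\H_0$ that is not surjective. Your version is in fact more careful than the paper's in two places: you explicitly verify that $h^{-1}$ is a homomorphism via the no-double-edge property of tournaments, and you correctly formulate $h^{-1}\circ h$ as an endomorphism of $\H_1$ (the paper writes ``endomorphism of $\H_2$'' in its final line, which does not literally match the hypothesis).
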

\begin{proof}
For contradiction, suppose there is an endomorphism $h$ of $\H_2$ that maps $\H_0$ to an isomorphic copy $\H'_0=i(\H_0)$ of itself that does not yield an isomorphic copy of $\H_1$. In particular, $|h(\H_1)|<|V(\H_1)|$. We proceed as in the proof of the Lemma~\ref{lem:7}. Choose $h^{-1}$ in the following fashion. We let $h^{-1}$ of $h(\H_0)$ be the natural isomorphism of $h(\H_0)$ to $\H_0$ (that inverts the isomorphism given by $h$ from $\H_0$ to $\H'_0$). Otherwise we choose $h^{-1}$ arbitrarily, such that $h^{-1}(y)=x$ only if $h(x)=y$. Since 
$\H_2$ is a reflexive tournament, $h^{-1}$ is an isomorphism. And $h^{-1} \circ h$ is an endomorphism of $\H_2$ that fixes $\H_0$ that does not yield an isomorphic copy of $\H_1$ in $h(\H_1)$, a contradiction.
\end{proof}

The following two lemmas generalize Lemmas~\ref{prop:main1} and \ref{prop:main2}.

\begin{lemma}[General Case I]
Let $\H_{0},\H_{1}, \ldots, \H_k, \H_{k+1}$
be reflexive tournaments, the first $k$ of which have Hamilton cycles $\HC_{0},\HC_{1}, \ldots, \HC_{k}$, respectively,
so that $\H_0 \subseteq H_1 \subseteq \cdots \subseteq \H_k \subseteq \H_{k+1}.$
Assume that $\H_0$, $(\H_1,\H_0)$, \ldots, $(\H_{k},\H_{k-1})$ are endo-trivial and that
\[
\begin{array}{lcl}
\mathrm{Spill}_{a_0}(\H_1[\H_0,\HC_{0}]) &= &V(\H_1) \\
\mathrm{Spill}_{a_1}(\H_2[\H_1,\HC_{1}]) &= &V(\H_2) \\
\hspace*{3mm} \vdots &\vdots &\hspace*{3mm}\vdots \\
\mathrm{Spill}_{a_{k-1}}(\H_{k}[\H_{k-1},\HC_{k-1}]) &= &V(\H_k).\\
\end{array}
\]
Assume that $\H_{k+1}$ retracts to $\H_k$ and also to every isomorphic copy $\H'_{k}=i(\H_k)$ of $\H_k$ in $\H_{k+1}$ with 
$\mathrm{Spill}_{a_k}(\H_{k+1}[\H'_k,i(\HC_{k})]) = V(\H_{k+1})$.
Then $\H_k$-{\sc Retraction} can be polynomially reduced to {\sc Surjective $\H_{k+1}$-Colouring}.
\label{prop:main-general1}
\end{lemma}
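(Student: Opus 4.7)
My plan is to adapt the reduction of Base Case~I (Lemma~\ref{prop:main1}) to the chain setting by attaching $\Cyl^*$-gadgets at every level of the tower and exploiting the chain of endo-trivialities via an iterated application of Lemma~\ref{lem:8}. Given an instance $\G$ of $\H_k$-Retraction, I would form $\G'$ by first identifying the common copy of $\H_k$ in $\G$ and in a fresh copy of $\H_{k+1}$, and then, for each $i\in\{0,1,\ldots,k\}$ and each vertex $v$ of $\G\cup\H_{k+1}$ lying outside $V(\H_i)$, attaching a fresh copy of $\Cyl^*_{a_i}$ whose bottom $\DC^*_{a_i}$ is identified with $\HC_i$ inside $\H_i$ and whose top contains $v$ as a distinguished vertex. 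The level-$0$ gadgets will force non-collapse on $\H_0$ in the backward direction, the intermediate gadgets will allow the retraction to spill up the tower in the forward direction, and the level-$k$ gadgets will witness the top-level spillage equality in the backward direction.

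For the forward direction, given a retraction $r\colon \G\to\H_k$, extend it to $h\colon \G'\to\H_{k+1}$ by taking the identity on the adjoined copy of $\H_{k+1}$ (which alone guarantees surjectivity) and $r$ on the rest of $\G$; each level-$i$ gadget attached to a vertex $v$ is then filled in using $\mathrm{Spill}_{a_i}(\H_{i+1}[\H_i,\HC_i])=V(\H_{i+1})$, and chaining these intermediate spillages along the tower lets us reach any $h(v)\in V(\H_{k+1})$. Because distinct gadgets intersect the rest of $\G'$ only through the lower $\H_i$'s, these extensions can be performed independently.

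For the backward direction, given a surjective $h\colon \G'\to\H_{k+1}$, the level-$0$ gadgets together with Lemma~\ref{lem:claim1} rule out $|h(\H_0)|=1$, because a constant $h|_{\HC_0}$ would collapse every level-$0$ gadget and hence force $|h(\G')|=1$, contradicting surjectivity. Since $\H_0$ is endo-trivial and $\H_{k+1}$ has no double edges, Lemma~\ref{lem:under-m} then forces $|h(\H_0)|=|V(\H_0)|$, so $h$ maps $\H_0$ isomorphically onto a copy $\H'_0=i(\H_0)$. Now $h$ restricted to $\H_{k+1}$ is an endomorphism of $\H_{k+1}$, and iterating Lemma~\ref{lem:8} up the chain $\H_0\subset\H_1\subset\cdots\subset\H_k\subset\H_{k+1}$ using the endo-trivialities of the pairs $(\H_i,\H_{i-1})$ yields that $h$ sends each $\H_i$ isomorphically onto a copy; in particular $h|_{\H_k}$ is an isomorphism onto some $\H'_k=i(\H_k)$. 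To see $\mathrm{Spill}_{a_k}(\H_{k+1}[\H'_k,i(\HC_k)])=V(\H_{k+1})$, for each $y\in V(\H_{k+1})$ surjectivity yields $x\in V(\G')$ with $h(x)=y$: if $x\in V(\H_k)$ then $y\in V(\H'_k)$ is a spillage witness by Remark~1, while otherwise a level-$k$ gadget containing $x$, combined with the natural isomorphism between the copies of $\Cyl^*_{a_k}$ in $\G'$ and in $\F(\H'_k,i(\HC_k))$, furnishes the required retraction exactly as in Base Case~I. The hypothesis then yields a retraction $r'\colon\H_{k+1}\to\H'_k$, and $i^{-1}\circ r'\circ h$ restricted to $\G$ is the desired retraction onto $\H_k$.

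The hard part will be the spillage-witness step: surjectivity only guarantees a preimage $x$ somewhere in $\G'$, but $x$ may sit deep inside a level-$i$ gadget with $i<k$, in which case the top-level argument does not apply directly. The intermediate spillage hypotheses $\mathrm{Spill}_{a_{i-1}}(\H_i[\H_{i-1},\HC_{i-1}])=V(\H_i)$ are precisely what allow such a low-level witness to be promoted to a level-$k$ witness, and the technical heart of the proof will be organising this promotion cleanly, together with attaching enough level-$k$ gadgets (possibly also to vertices of lower-level gadgets), without creating a circular dependency between the gadgets at different levels.
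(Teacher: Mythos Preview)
Your overall architecture is right, but the gadget-attachment scheme is too aggressive and breaks the forward direction. You attach a $\Cyl^*_{a_i}$ to \emph{every} vertex $v$ of $\G\cup\H_{k+1}$ outside $V(\H_i)$, and then claim that in the forward direction (identity on $\H_{k+1}$, retraction $r$ on $\G$) these can be filled ``by chaining spillages''. They cannot: take $i=0$ and any $v\in V(\H_{k+1})\setminus V(\H_1)$. You need a homomorphism from $\Cyl^*_{a_0}$ into $\H_{k+1}$ that is the identity on the bottom cycle $\HC_0$ and sends the top vertex to $h(v)=v$. The hypothesis $\mathrm{Spill}_{a_0}(\H_1[\H_0,\HC_0])=V(\H_1)$ only lets the top of a level-$0$ cylinder reach vertices of $\H_1$, and nothing in the assumptions gives $\mathrm{Spill}_{a_0}(\H_{k+1}[\H_0,\HC_0])=V(\H_{k+1})$; indeed $\H_{k+1}$ need not retract to $\H_0$ (that is precisely the situation the recursion is designed to handle), so Lemma~\ref{lem:spillage} is unavailable. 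The spillage conditions at different levels use cylinders of \emph{different sizes} $a_0,a_1,\ldots$, so there is no way to compose them into a single level-$0$ retraction reaching $v$.

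The fix, which is what the paper does, is to attach a level-$(i{-}1)$ cylinder only to vertices $v\in V(\H_i)\setminus V(\H_{i-1})$, so that in the forward direction $h(v)=v\in V(\H_i)$ and the single spillage hypothesis $\mathrm{Spill}_{a_{i-1}}(\H_i[\H_{i-1},\HC_{i-1}])=V(\H_i)$ suffices at each level; level-$k$ cylinders are attached to all of $V(\G')\setminus V(\H_k)$ and are fillable because $\H_{k+1}$ retracts to $\H_k$ (Lemma~\ref{lem:spillage}). With this sparser attachment, the backward collapse argument becomes inductive: $|h(\H_0)|=1$ forces (via the level-$0$ gadgets and Lemma~\ref{lem:claim1}) $|h(\H_1)|=1$, which via the level-$1$ gadgets forces $|h(\H_2)|=1$, and so on up to $|h(\G')|=1$. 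Your worry in the last paragraph about witnesses landing inside low-level cylinders is legitimate, and your proposed remedy---attaching additional level-$k$ cylinders to the internal vertices of the lower-level gadgets---is sound, since in the forward direction those internal vertices map into some $V(\H_i)\subseteq V(\H_{k+1})$ and the level-$k$ spillage covers all of $V(\H_{k+1})$.
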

\begin{proof}
Let $\G$ be an instance of {\sc $\H_k$-Retraction}.
We will build an instance $\G''$ of {\sc Surjective $\H_{k+1}$-Colouring} in the following fashion. First, take a copy of $\H_{k+1}$ together with $\G$ and build $\G'$ by identifying these on the copy of $\H_{k}$ that they both possess as a subgraph.
We now build $\G''$ as follows. 
First we augment $\G'$ with a new copy of $\Cyl_{a_k}$ for every vertex $v\in V(\G')\setminus V(\H_k)$. Vertex~$v$ is to be identified with any vertex in the top copy of 
$\DC^*_{a_k}$ in $\Cyl_{a_k}$, and the bottom copy of $\DC^*_{a_k}$ is to be identified with $\HC_k$ according to the identity function.
Then, for each $i \in [k+1]$, and $v \in V(\H_{i}) \setminus V(\H_{i-1})$, add a copy of $\Cyl_{a_{i-1}}$, where $v$ is identified with any vertex in the top copy of $\DC^*_{a_{i-1}}$ in $\Cyl_{a_{i-1}}$ and the bottom copy of $\DC^*_{i-1}$ is to be identified with $\H_{i-1}$ according to the identity map of $\DC^*_{a_{i-1}}$ to $\HC_{i-1}$.
We claim that $\G$ retracts to $\H_k$ if and only if there exists a surjective homomorphism from $\G''$ to $\H_{k+1}$.

First suppose that $\G$ retracts to $\H_k$.
Let $h$ be a retraction from $\G$ to $\H_{k}$. Extend $h$ mapping $\H_{k+1}$ according to the identity to ensure the final mapping is surjective. Finally, we map the various copies of $\Cyl_{a_{i-1}}$ in $\G''$ in any suitable fashion, which will always exist due to our assumptions and the fact that $\mathrm{Spill}_{a_{k}}(\H_{k+1}[\H_{k},\HC_{k}]) = V(\H_{k+1})$, which follows from our assumption that $\H_{k+1}$ retracts to $\H_k$ and Lemma~\ref{lem:spillage}.

Now suppose that if there exists a surjective homomorphism~$h$ from $\G''$ to $\H_{k+1}$.
Suppose that $|h(\H_0)|=1$. Then 
$|h(\Cyl_{a_0})|=1$
 by Lemma~\ref{lem:claim1}. Now we follow the chain of spills to deduce that $|h(\H_{k+1})|=1$, which is not possible.
Now, $1<|h(\H_0)|<a_0$ is not possible either due to Lemma~\ref{lem:under-m}. 
Thus, $|h(\H_0)|=|V(\H_0)|$ and indeed $h$ maps $\H_0$ to a copy of itself in $\H_{k+1}$ which we will call $\H'_0=i(\H_0)$. We now apply Lemma~\ref{lem:8} as well as our assumed endo-trivialities to derive that $h$ in fact maps $\H_k$ by the isomorphism $i$ to a copy of itself in $\H_{k+1}$ which we will call $\H'_{k}$. Since $h$ is surjective, we can deduce that $\mathrm{Spill}_{a_k}(\H_{k+1}[\H'_k,i(\HC_{k})]) = V(\H_{k+1})$
in the same way as in the proof of Lemma~\ref{prop:main1}.
 and so there exists a retraction $r$ from $\H_{k+1}$ to $\H'_k$.  Now $i^{-1} \circ r \circ h$ gives the desired retraction of $\G''$ to $\H_k$.   
\end{proof}

\begin{lemma}[General Case II]
\label{prop:main-general2}
Let  $\H_{0},\H_{1}, \ldots, \H_k, \H_{k+1}$ be reflexive tournaments, the first $k+1$ of which have Hamilton cycles $\HC_{0},\HC_{1}, \ldots, \HC_{k}$, respectively,
so that $\H_0 \subseteq H_1 \subseteq \cdots \subseteq \H_k \subseteq \H_{k+1}$.
Suppose that $\H_0$, $(\H_1,\H_0)$, \ldots, $(\H_{k},\H_{k-1}), (\H_{k+1},\H_{k})$ are endo-trivial and that
\[
\begin{array}{lcl}
\mathrm{Spill}_{a_0}(\H_1[\H_0,\HC_{0}]) &= &V(\H_1) \\
\mathrm{Spill}_{a_1}(\H_2[\H_1,\HC_{1}]) &= &V(\H_2) \\
\hspace*{3mm} \vdots &\vdots &\hspace*{3mm}\vdots \\
\mathrm{Spill}_{a_{k-1}}(\H_{k}[\H_{k-1},\HC_{k-1}]) &= &V(\H_k)\\
\mathrm{Spill}_{a_{k}}(\H_{k+1}[\H_{k},\HC_{k}]) &= &V(\H_{k+1})
\end{array}
\]
Then {\sc $\H_{k+1}$-Retraction} can be polynomially reduced to {\sc Surjective $\H_{k+1}$-Colouring}.
\end{lemma}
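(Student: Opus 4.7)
The plan is to mimic the construction of General Case I but adapted to reducing $\H_{k+1}$-Retraction, much as Base Case II adapts Base Case I. Given an instance $\G$ of $\H_{k+1}$-Retraction (so $\H_{k+1}\subseteq\G$), I would build an instance $\G'$ of Surjective $\H_{k+1}$-Colouring by attaching two families of cylinder gadgets. The \emph{outer} gadgets are copies of $\Cyl_{a_k}$, one for each $v\in V(\G)\setminus V(\H_k)$, where $v$ is identified with a vertex of the top cycle of the gadget and the bottom cycle is identified with $\HC_k$ via the identity. The \emph{inner} gadgets are, for each $i\in\{1,\dots,k\}$ and each $v\in V(\H_i)\setminus V(\H_{i-1})$, a copy of $\Cyl_{a_{i-1}}$, with $v$ placed on the top cycle and the bottom cycle identified with $\HC_{i-1}$. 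Note that no extra copy of $\H_{k+1}$ need be added, since $\G$ already contains it, thereby automatically guaranteeing a witness of surjectivity. The claim to prove is that $\G$ retracts to $\H_{k+1}$ if and only if $\G'$ admits a surjective homomorphism to $\H_{k+1}$.

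For the forward direction, any retraction $r:\G\to\H_{k+1}$ extends to $\G'$ by filling each gadget using the spill hypotheses. The outer gadget at $v$ lifts to a map into $\H_{k+1}$ whose marked top vertex is sent to $r(v)$; this is possible precisely because $\mathrm{Spill}_{a_k}(\H_{k+1}[\H_k,\HC_k])=V(\H_{k+1})$ (as in the proof of Lemma~\ref{prop:main2}). An inner gadget at $v\in V(\H_i)\setminus V(\H_{i-1})$ has $r(v)=v\in V(\H_i)$, and the corresponding hypothesis $\mathrm{Spill}_{a_{i-1}}(\H_i[\H_{i-1},\HC_{i-1}])=V(\H_i)$ gives a filling into $\H_i\subseteq\H_{k+1}$. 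Independence of these fillings follows because distinct cylinder gadgets meet only inside $\G$.

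For the backward direction, let $h:\G'\to\H_{k+1}$ be surjective. First, $|h(\H_0)|=1$ is impossible: Lemma~\ref{lem:claim1} applied to the innermost gadgets then forces $|h(\H_1)|=1$, and iterating this up the chain of inner gadgets gives $|h(\H_k)|=1$, after which the outer gadgets force $|h(\G')|=1$, contradicting surjectivity. Second, $1<|h(\H_0)|<|V(\H_0)|$ is ruled out by Lemma~\ref{lem:under-m}, since $\H_0$ is endo-trivial and $\H_{k+1}$, being a reflexive tournament, admits no double edges. Hence $h$ maps $\H_0$ isomorphically onto some copy $\H_0'=i_0(\H_0)\subseteq\H_{k+1}$. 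I then apply Lemma~\ref{lem:8} iteratively for $j=1,2,\dots,k+1$, taking the big tournament to be $\H_{k+1}$, the middle to be $\H_j$ and the base to be $\H_{j-1}$: the endo-triviality of the pair $(\H_j,\H_{j-1})$ supplies the hypothesis, and the inductively obtained isomorphic copy of $\H_{j-1}$ upgrades to an isomorphic copy of $\H_j$. After $k+1$ applications, $h$ restricted to $\H_{k+1}$ maps $\H_{k+1}$ onto an isomorphic copy of itself and is therefore an automorphism $\alpha$ of $\H_{k+1}$. The composition $\alpha^{-1}\circ h$ restricted to $\G$ is then the identity on $\H_{k+1}\subseteq\G$, yielding the desired retraction.

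The main obstacle is bookkeeping: the outer and inner gadgets both attach to vertices of $\H_{k+1}$, and one must check that the two spill fillings remain compatible. A minor secondary issue is that Lemma~\ref{lem:8} requires each intermediate $\H_j$ to be a strongly connected reflexive tournament; this is immediate for $j\le k$ from the Hamilton cycles $\HC_j$, and for $j=k+1$ it follows either as a standing assumption or from the spill hypothesis, by a routine adaptation of Lemma~\ref{lem:Benoit-lemma}(2).
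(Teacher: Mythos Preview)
Your proof is correct and follows essentially the same approach as the paper: attach cylinder gadgets at each level, use the spill hypotheses for the forward direction, and in the backward direction rule out $|h(\H_0)|<|V(\H_0)|$ via Lemmas~\ref{lem:claim1} and~\ref{lem:under-m}, then climb the chain of endo-trivialities via Lemma~\ref{lem:8} to conclude that $h|_{\H_{k+1}}$ is an automorphism~$\alpha$, so that $\alpha^{-1}\circ h$ is the desired retraction.

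There is one worthwhile difference in the constructions. The paper only attaches gadgets at vertices of $\H_{k+1}$ (via the ranges $i\in[k+1]$, $v\in V(\H_i)\setminus V(\H_{i-1})$), whereas you additionally attach an outer $\Cyl_{a_k}$ to every $v\in V(\G)\setminus V(\H_k)$, including vertices outside $\H_{k+1}$. Your version makes the collapsing step airtight: from $|h(\H_k)|=1$ you genuinely obtain $|h(\G')|=1$, which cleanly contradicts surjectivity. In the paper's construction the chain only yields $|h(\H_{k+1})|=1$, and one still needs the (unstated) observation that the remaining vertices of $\G$ cannot by themselves surject onto $\H_{k+1}$; your extra gadgets sidestep this. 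On the final step, the paper applies Lemma~\ref{lem:8} up to $\H_k$ and then invokes Lemma~\ref{lem:7} for the last move to an automorphism of $\H_{k+1}$, whereas you fold this into a $(k{+}1)$st application of Lemma~\ref{lem:8}; the two are equivalent, and your remark on strong connectivity is harmless since the proof of Lemma~\ref{lem:8} only uses that the ambient structure is a reflexive tournament.
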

\begin{proof}
Let $\G$ be an instance of {\sc $\H_{k+1}$-Retraction}.
We build an instance $\G'$ of {\sc Surjective $\H_{k+1}$-Colouring}
in the following fashion. Build $\G'$ by, for each $i \in [k+1]$, and $v \in V(\H_{i}) \setminus V(\H_{i-1})$, adding a copy of $\Cyl_{a_{i-1}}$, where $v$ is identified with any vertex in the top copy of $\DC^*_{a_{i-1}}$ in $\Cyl_{a_{i-1}}$ and the bottom copy of $\DC^*_{i-1}$ is to be identified with $\H_{i-1}$ according to the identity map of $\DC^*_{a_{i-1}}$ to $\HC_{i-1}$.
We claim that $\G$ retracts to $\H_{k+1}$ if and only if there exists a surjective homomorphism from $\G'$ to $\H_{k+1}$.

First suppose that $\G$ retracts to $\H_{k+1}$.
Let $h$ be a retraction from $\G$ to $\H_{k+1}$. Then we can extend $h$ by mapping $\G'$ in some suitable fashion, which is possible due to the spill assumptions.

Now suppose that there exists a surjective homomorphism~$h$ from $\G'$ to $\H_{k+1}$. 
Suppose that $|h(\H_0)|=1$. Then $|h(\Cyl_{a_0})|=1$ by Lemma~\ref{lem:claim1}. Now we follow the chain of spills to deduce that $|h(\H_{k+1})|=1$, a contradiction. 
Now, $1<|h(\H_0)|<a_0$ is not possible either due to Lemma~\ref{lem:under-m}. 
Thus, $|h(\H_0)|=|V(\H_0)|$ and indeed $h$ maps $\H_0$ to a copy of itself in $\H_{k+1}$ which we will call $\H'_0=i(\H_0)$. We now apply Lemma~\ref{lem:8} as well as our assumed endo-trivialities to derive that $h$ in fact maps $\H_k$ by the isomorphism $i$ to a copy of itself in $\H_{k+1}$, which we will call $\H'_{k}$. Now we can deduce, via Lemma~\ref{lem:7}, that $h(\H_{k+1})$ is an automorphism of $\H_{k+1}$, which we call $\alpha$. The required retraction from $\G'$ to 
$\H_{k+1}$ 
is now given by $\alpha^{-1} \circ h$.
\end{proof}

\subsection{Final Steps for Hardness for Non-Transitive Reflexive Tournaments}

We first prove, by using the lemmas from Section~\ref{s-general}, that {\sc Surjective $\H$-Colouring} is \NP-complete if $\H$ is a non-transitive reflexive tournament that is strongly connected.
For our discourse it is not necessary to know precisely what is a Taylor operation, but we will use the following result.

\begin{theorem}[\cite{JBK,Larose2005}]
Let $\H$ be a finite structure so that the idempotent polymorphisms of $\H$ omit all Taylor operations. Then 
{\sc $\H$-Retraction} is \NP-complete.
\label{thm:ret-hardness}
\end{theorem}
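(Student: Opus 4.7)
The plan is to reduce $\H$-\textsc{Retraction} to a CSP with constants and then invoke the algebraic CSP dichotomy. First, I would observe that $\H$-\textsc{Retraction} is polynomially equivalent to $\mathrm{CSP}(\H^\ast)$, where $\H^\ast$ is $\H$ augmented with a unary relation $R_v:=\{v\}$ for every $v\in V(\H)$. The reduction is the standard one: given a superstructure $\G$ of $\H$, label each $v\in V(\H)\subseteq V(\G)$ with $R_v$; then any homomorphism $\G\to \H^\ast$ is forced to be the identity on $V(\H)$, i.e.\ a retraction of $\G$ onto $\H$. This is essentially the observation of Feder and Vardi mentioned earlier in the introduction.

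Second, I would verify that $\Pol(\H^\ast)=\{f\in\Pol(\H):f\text{ is idempotent}\}$. A polymorphism of $\H^\ast$ must preserve every singleton relation $R_v$, which is precisely the idempotency condition $f(v,\ldots,v)=v$; conversely every idempotent polymorphism of $\H$ trivially preserves the singletons. So the hypothesis that the idempotent polymorphisms of $\H$ omit every Taylor operation becomes the statement that $\Pol(\H^\ast)$ itself omits every Taylor operation.

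Finally, I would invoke the hardness direction of the algebraic classification of CSPs: if the polymorphism clone of a finite relational structure $\bA$ (over a language with all constants, or equivalently, consisting entirely of idempotent operations) contains no Taylor operation, then $\mathrm{CSP}(\bA)$ is \NP-hard. This is what is being cited: the clone is shown to witness a primitive positive interpretation of a structure with known \NP-complete CSP (for example one encoding $3$-\textsc{SAT} or $1$-in-$3$-\textsc{SAT}), from which the \NP-hard reduction is read off. Since $\H$-\textsc{Retraction} is obviously in \NP, this yields \NP-completeness.

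The main obstacle is of course this last step: the implication ``omits a Taylor operation $\Rightarrow$ \NP-hard CSP'' is a deep theorem of universal algebra, resting on Hobby--McKenzie tame congruence theory and the classification of minimal clones, and subsumed by the Bulatov--Zhuk proof of the Feder--Vardi conjecture. For the present paper we treat it as a black box, applying it through the endo-triviality machinery developed earlier to concrete reflexive tournaments in order to deduce \NP-hardness of {\sc Surjective $\H$-Colouring}.
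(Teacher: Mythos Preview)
The paper does not prove this theorem at all: it is stated as a known result, attributed to \cite{JBK,Larose2005}, and used as a black box in the proof of Corollary~\ref{cor:strongly-connected1}. Your outline---reducing {\sc $\H$-Retraction} to $\mathrm{CSP}(\H^\ast)$ with all constants, identifying $\Pol(\H^\ast)$ with the idempotent polymorphisms of $\H$, and then invoking the algebraic hardness criterion---is exactly the standard argument underlying those cited references, and you correctly recognise that the last step is the deep part to be treated as a black box here. So there is nothing to compare: your sketch is accurate, and the paper simply cites the result without proof.
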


\begin{corollary}
Let $\H$ be a strongly connected reflexive tournament. 
Then {\sc Surjective $\H$-Colouring} is \NP-complete.
\label{cor:strongly-connected1}
\end{corollary}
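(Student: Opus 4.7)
My plan proceeds in two stages. First, since a strongly connected reflexive tournament on at least two vertices is necessarily non-transitive, the algebraic characterisation of~\cite{larose2006taylor} ensures that the idempotent polymorphisms of $\H$ omit every Taylor operation, and so Theorem~\ref{thm:ret-hardness} gives that {\sc $\H$-Retraction} is \NP-complete. (Membership of {\sc Surjective $\H$-Colouring} in \NP\ is immediate by guessing the surjection.) It therefore suffices to exhibit a polynomial-time reduction to {\sc Surjective $\H$-Colouring} from either {\sc $\H$-Retraction} itself, or from {\sc $\H_k$-Retraction} for some proper subtournament $\H_k$ of $\H$ that is still \NP-hard by the same classification.

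For the reduction, I would build a tower
\[
\H_0 \subsetneq \H_1 \subsetneq \cdots \subsetneq \H_{k+1} \subseteq \H
\]
of strongly connected reflexive subtournaments of $\H$, equipping each $\H_i$ with a directed Hamilton cycle $\HC_i$ (via Camion's theorem, as exploited already in Lemma~\ref{l-Ham}), and arranging that $\H_0$ is endo-trivial, each pair $(\H_i,\H_{i-1})$ is endo-trivial, and each successive spill $\mathrm{Spill}_{a_{i-1}}(\H_i[\H_{i-1},\HC_{i-1}])$ equals $V(\H_i)$. The tower is constructed greedily: one starts from an endo-trivial subtournament $\H_0$ (for example, a minimal non-trivial strongly connected subtournament, such as a directed $3$-cycle in the non-transitive case) and extends one step at a time while the three conditions can be preserved. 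Lemma~\ref{lem:spillage} is the workhorse for establishing the spill conditions whenever the extension step is itself a retract of the next larger tournament.

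Two cases arise depending on when the tower halts. If it reaches $\H_{k+1}=\H$, then Lemma~\ref{prop:main-general2} directly yields the required reduction from {\sc $\H$-Retraction} to {\sc Surjective $\H$-Colouring}. Otherwise the tower stalls at some $\H_k \subsetneq \H$ because $\H$ fails to retract onto an isomorphic copy $i(\H_k)\subseteq \H$ whose spill fills $V(\H)$, in which case Lemma~\ref{prop:main-general1} instead reduces {\sc $\H_k$-Retraction} (which is \NP-complete by the classification applied to $\H_k$, or by induction on $|V(\H)|$) to {\sc Surjective $\H$-Colouring}. The principal obstacle is showing that such a tower can always be constructed; most delicate is preserving endo-triviality of the pair $(\H_{i+1},\H_i)$ at each extension step, a subtle algebraic condition that need not be inherited by arbitrary supersets. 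A careful combinatorial analysis of how strongly connected subtournaments embed inside one another, guided by Lemma~\ref{lem:Benoit-lemma} (absence of double edges, strong connectivity, and isolation of automorphisms in the endomorphism digraph), should suffice to validate the extension at each stage.
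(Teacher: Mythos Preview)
Your overall strategy—construct a tower $\H_0\subsetneq\cdots$ and invoke one of the two General Case lemmas—is the paper's, but your case split is stated backwards and the tower-building mechanism you sketch is not the one that works. Lemma~\ref{prop:main-general1} requires that $\H$ retract to \emph{every} isomorphic copy $i(\H_k)$ whose spill equals $V(\H)$; so when ``$\H$ fails to retract onto an isomorphic copy $i(\H_k)$ whose spill fills $V(\H)$'', that is precisely when Lemma~\ref{prop:main-general1} does \emph{not} apply, contrary to what you write. Relatedly, your initial $\H_0$ (an arbitrary endo-trivial subtournament such as a directed $3$-cycle) need not be a retract of $\H$, so even the base instance of Lemma~\ref{prop:main-general1} may be unavailable from the outset.

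The paper does not climb greedily. It alternates: with a tower up to $\H_k$ and $\H$ retracting onto $\H_k$, one tests Lemma~\ref{prop:main-general1}. Its failure \emph{produces} a specific copy $\H'_k=i(\H_k)$ with full spill to which $\H$ does not retract; one then passes to the image tower under $i$. If $(\H,\H'_k)$ is endo-trivial, Lemma~\ref{prop:main-general2} finishes. Otherwise a non-automorphic endomorphism of $\H$ fixing $\H'_k$ yields, after minimising, a strict intermediate $\H'_k\subsetneq\H_{k+1}\subsetneq\H$ with $(\H_{k+1},\H'_k)$ endo-trivial and $\H$ retracting to $\H_{k+1}$; one iterates with a strictly taller tower. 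The new spill condition $\mathrm{Spill}(\H_{k+1}[\H'_k])=V(\H_{k+1})$ is not obtained from Lemma~\ref{lem:spillage} (indeed $\H_{k+1}$ need not retract to $\H'_k$) but is inherited from $\mathrm{Spill}(\H[\H'_k])=V(\H)$ by composing with the retraction $\H\to\H_{k+1}$. So your ``principal obstacle''—securing endo-triviality of each pair—is resolved exactly by the failure of $(\H,\H'_k)$ to be endo-trivial, not by an independent combinatorial analysis.
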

\begin{proof}
As $\H$ is is a strongly connected reflexive tournament, which has more than one vertex by our definition, $\H$ is not transitive.
Note that {\sc $\H$-Retraction}
 is \NP-complete, since non-transitive reflexive tournaments omit Taylor polymorphisms~\cite{larose2006taylor}, following Theorem~\ref{thm:ret-hardness}. Thus, if $\H$ is endo-trivial, the result follows from Lemma~\ref{prop:main1} (note that we could also have used Corollary~\ref{c-main}).

Suppose $\H$ is not endo-trivial. Then, by Lemma~\ref{lem:endo-retraction-trivial}, $\H$ is not retract-trivial either. This means $\H$ has a non-trivial retraction to some subtournament~$\H_0$. We may assume
that $\H_0$ is endo-trivial, as otherwise we will repeat the argument until we find a retraction from $\H$ to an endo-trivial subtournament.
 
If $\H$ retracts to all isomorphic copies $\H'_0=i(\H_0)$ of $\H_0$ within it, except possibly those for which $\mathrm{Spill}_m(\H[\H'_0,i(\HC_0)]) \neq V(\H)$, then the result follows from  Lemma~\ref{prop:main1}. So there is a copy $\H'_0=i(\H'_0)$ to which $\H$ does not retract for which $\mathrm{Spill}_m(\H[\H'_0,i(\HC_0)]) = V(\H)$.
If $(\H,\H'_0)$ is endo-trivial, the result follows from Lemma~\ref{prop:main2}. Thus we assume $(\H,\H'_0)$ is not endo-trivial and we deduce the existence of $\H'_0 \subset \H_1 \subset \H$ ($\H_1$ is strictly between $\H$ and $\H'_0$) so that $(\H_1,\H'_0)$ is endotrivial and $\H$ retracts to $\H_1$. Now we are ready to break out. Either $\H$ retracts to all isomorphic copies of $\H'_1=i(\H_1)$ in $\H$, except possibly for those so that $\mathrm{Spill}_m(\H[\H'_1,i(\HC_1)]) \neq V(\H)$, and we apply Lemma~\ref{prop:main-general1}, or there exists a copy $\H'_1$, with $\mathrm{Spill}_m(\H[\H'_1,i(\HC_1)]) = V(\H)$, to which it does not retract. Then $\H'_1$ contains $\H''_0$ a copy of $\H''_0$ so that $(\H'_1,\H''_0)$ and $\H''_0$ are endo-trivial. We now continue iterating this method, which will terminate because our structures are getting strictly larger.
\end{proof}

In order to deal with reflexive tournaments that are not strongly connected we
need the following strengthened version of Corollary~\ref{cor:strongly-connected1}.
\begin{corollary}
Let $\H$ be a strongly connected reflexive tournament. 
Then  {\sc Surjective $\H$-Colouring}
is \NP-complete
even for strongly connected digraphs.
\label{cor:strongly-connected2}
\end{corollary}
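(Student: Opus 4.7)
The plan is to argue that the reductions constructed in the proof of Corollary~\ref{cor:strongly-connected1} already produce strongly connected instances, so nothing substantially new is required. Concretely, for each of the four reductions used there (Lemmas~\ref{prop:main1}, \ref{prop:main2}, \ref{prop:main-general1}, and \ref{prop:main-general2}), I will verify that the constructed graph $\G^*$ (either $\G'$ or $\G''$) lies in a single strong component; the equivalence with the corresponding retraction problem is already established, and the hardness source (Theorem~\ref{thm:ret-hardness}) does not need to be strengthened.

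Two elementary facts underlie the argument. First, by Lemma~\ref{l-Ham} every endo-trivial reflexive tournament has a directed Hamilton cycle, and hence is strongly connected; the same holds for $\H$ by assumption, and for every subtournament $\H_i$ appearing in the chains of Section~\ref{s-general}. Second, the gadget $\Cyl^*_m$ is strongly connected: within each of the $m$ copies of $\DC^*_m$ strong connectivity comes from the Hamilton cycle, the red edges $(i,j)\to(i,j+1)$ allow one to move up through the cylinder, and the green edges $(i,j+1)\to(i+1,j)$ allow one to move down (with a shift of one vertex in the cycle); combining these with the intra-copy cycles yields a directed path between any two vertices of $\Cyl^*_m$.

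Given these two facts, the strong connectivity of $\G^*$ follows by inspection. In every reduction, each vertex of $\G^*$ that does not lie in $\H_0$ is identified with some vertex in the top copy of a $\Cyl^*_{a_{i-1}}$ gadget whose bottom copy is the Hamilton cycle $\HC_{i-1}$ of $\H_{i-1}$, which in turn is a subtournament of $\H$. Because each such gadget is strongly connected and $\H$ is strongly connected, every vertex of $\G^*$ can reach $\H_0$ through its attached gadget and can be reached from $\H_0$ symmetrically; and $\H_0$ is strongly connected within $\H$. The vertices of $\G$ inherited from the retraction instance are likewise pinned to $\H_0$ by their $\Cyl^*_{a_{k}}$ gadgets, so any weakly disconnected or non-strongly-connected structure of the original $\G$ is absorbed into the single strong component containing $\H_0$. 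In the chained constructions of Lemmas~\ref{prop:main-general1} and~\ref{prop:main-general2}, this argument propagates along $\H_0 \subseteq \H_1 \subseteq \cdots \subseteq \H_{k+1}$: each intermediate layer is strongly connected to the next via its attached gadgets.

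The main obstacle is thus not conceptual but notational: one must carefully verify the strong connectivity claim for $\Cyl^*_m$ (which is where the combinatorics of red and green edges really matters) and then track through the somewhat involved recursive constructions to confirm that no ``dangling'' vertex escapes being attached to $\H_0$. Once this is done, the proof of Corollary~\ref{cor:strongly-connected1} delivers Corollary~\ref{cor:strongly-connected2} with no change except the additional observation, inserted at the end of each reduction, that the instance produced is strongly connected.
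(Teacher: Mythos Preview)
Your proposal is correct and takes a slightly different route from the paper. The paper first preprocesses the \emph{input} $\G$ of the retraction problem to make it strongly connected (by adding directed paths of length $|V(\H')|$ between pairs of vertices, which preserves the retraction answer since $\H'$ is reflexive), and then observes that the constructed instance, being built from strongly connected pieces $\G$, $\H$, and $\Cylm$ all sharing $\H_0$, is strongly connected. You instead observe directly that the \emph{output} instance is strongly connected regardless of whether $\G$ is, because every vertex outside $\H_0$ is attached to $\H_0$ through a strongly connected $\Cyl^*$ gadget whose bottom cycle is $\HC_0$. Your approach is arguably cleaner, since it avoids the preprocessing step and the accompanying (easy but unstated) verification that the added paths do not change the retraction answer; the paper's approach has the minor advantage that it localises the argument to a single sentence rather than requiring a case-by-case check across the four reductions. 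One small imprecision in your write-up: the internal vertices of the $\Cyl^*$ gadgets are not themselves ``identified with some vertex in the top copy'', but they lie inside a strongly connected gadget containing $\HC_0$, so the conclusion stands.
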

\begin{proof}
We need to argue that the instances of {\sc Surjective $\H$-Colouring}
that we have constructed 
before
can be assumed to be strongly connected. Noting that $\H$ and the gadgets $\Cylm$ are strongly connected, this is clear once we can assume the inputs to our Retraction problems are strongly connected. For 
{\sc $\H'$-Retraction},
where $\H'$ is a strongly connected reflexive tournament, we can surely assume our inputs are strongly connected. If they were not, then we add individual directed paths of length $|V(\H')|$ between the relevant vertices. This will not affect the truth of an instance and the result follows.
\end{proof}
When $\G$ is a reflexive tournament, we may break it up into strongly connected components 
$\G(1),\ldots,\G(k)$ so that, for all $i < j \in [k]$, for all $x \in \G(i)$ and for all $y \in \G(j)$, $(x,y) \in E(\G)$. This is the standard decomposition, inducing a standard order on the connected components, that we will use.

We now prove our main hardness result.

\begin{theorem}
Let $\H$ be a non-transitive reflexive tournament. Then {\sc Surjective $\H$-Colouring}
is \NP-complete.
\label{thm:hauptsatz}
\end{theorem}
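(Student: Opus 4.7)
The case where $\H$ is itself strongly connected is already handled by Corollary~\ref{cor:strongly-connected1}, so the plan is to reduce the general non-transitive case to that one. First I would decompose $\H$ into its strongly connected components $\H(1),\ldots,\H(k)$ in the standard order (so for $i<j$ all edges from $\H(i)$ to $\H(j)$ are present). Since $\H$ is a non-transitive reflexive tournament, at least one component $\H(i^*)$ must have more than one vertex: otherwise every component is a singleton and $\H$ would be the transitive tournament on $k$ vertices. Being a strongly connected reflexive tournament of size $>1$, such an $\H(i^*)$ is itself non-transitive, and so by Corollary~\ref{cor:strongly-connected2}, {\sc Surjective $\H(i^*)$-Colouring} is \NP-complete even when restricted to strongly connected inputs.

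Given a strongly connected instance $\G$ of {\sc Surjective $\H(i^*)$-Colouring}, I would construct an instance $\G'$ of {\sc Surjective $\H$-Colouring} by ``plugging $\G$ into the $i^*$-th slot of $\H$''. Concretely, take $\G$ together with, for each $j\neq i^*$, a disjoint labelled copy $\H(j)^\circ$ of $\H(j)$. Add edges so that the inter-block structure mimics that of $\H$: for $j<j'$ with $j,j'\in\{1,\ldots,k\}\setminus\{i^*\}$, insert every edge from $\H(j)^\circ$ to $\H(j')^\circ$; for $j<i^*$, insert every edge from $\H(j)^\circ$ to $\G$; for $j>i^*$, insert every edge from $\G$ to $\H(j)^\circ$. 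The reduction is clearly polynomial as $\H$ is fixed.

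For correctness, the easy direction is that any surjective homomorphism $g:\G\to\H(i^*)$ extends by the identity on each $\H(j)^\circ$ to a surjective homomorphism $\G'\to\H$. For the other direction, let $f:\G'\to\H$ be a surjective homomorphism. Each $\H(j)^\circ$ ($j\neq i^*$) is strongly connected and $\G$ is strongly connected by assumption, so each is mapped by $f$ into a single component $\H(l_j)$ of $\H$. The inter-block edges in $\G'$ force $l_1\le l_2\le\cdots\le l_k$ in the standard order on components (a forward edge between components in $\G'$ cannot become a backward edge in $\H$). Since $f$ is surjective, the map $j\mapsto l_j$ from $\{1,\ldots,k\}$ to itself must be onto, hence strictly increasing, hence the identity. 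In particular $f(\G)\subseteq \H(i^*)$, and because no other piece of $\G'$ lands in $\H(i^*)$, surjectivity of $f$ forces $f\restriction_\G$ to map $\G$ surjectively onto $\H(i^*)$, as required.

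The technical heart is just the order-preservation argument combined with surjectivity to conclude $l_j=j$; everything else is routine. The one point worth double-checking is that I am entitled to assume the input $\G$ to the inner problem is strongly connected, which is exactly what Corollary~\ref{cor:strongly-connected2} provides. Membership in \NP\ is immediate (guess the homomorphism and verify), so \NP-completeness follows.
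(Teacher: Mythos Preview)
Your proposal is correct and follows essentially the same approach as the paper: decompose $\H$ into strongly connected components, locate a nontrivial component $\H(i^*)$, and reduce {\sc Surjective $\H(i^*)$-Colouring} (on strongly connected inputs, via Corollary~\ref{cor:strongly-connected2}) to {\sc Surjective $\H$-Colouring} by plugging $\G$ into the $i^*$-th slot. Your backward direction is in fact more carefully argued than the paper's, which simply asserts that $s(V(\G'(i)))\subseteq V(\H(i))$ for all $i$; your monotone-surjection argument on the indices $l_j$ is exactly what is needed to justify that claim.
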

\begin{proof}
For strongly connected tournaments, the result follows from Corollary~\ref{cor:strongly-connected1}. Let $\H$ instead have $k>1$ strongly connected components
$\H(1),\ldots,\H(k)$. 
Since $\H$ is not transitive, one of these strongly connected components, $\H(i)$, must be of size greater than $1$, whereupon we know from Corollary~\ref{cor:strongly-connected2} that
{\sc Surjective $\H(i)$-Colouring}
 is \NP-complete, even when restricted to strongly connected inputs. 
 
 Let us reduce
{\sc Surjective $\H(i)$-Colouring} to {\sc Surjective $\H$-Colouring} 
by taking a strongly connected input $\G$ for the former and building $\G'$ by adding a copy of $\H$ restricted to $V(\H(1)),\ldots,V(\H(i-1))$, where every vertex here has an edge to every vertex of $\G$, and adding a copy of  $\H$ restricted to $V(\H(i+1)),\ldots,V(\H(k))$, where every vertex there has an edge from every vertex of $\G$. 
Note that $\G'$ has $k$ strongly connected components $\G'(1),\ldots, \G'(k)$, where $\G'(h)$ is isomorphic to $\H'(h)$ for $h=1,\ldots,k$, $h\neq i$.
We claim 
that there exists a surjective homomorphism from $\G$ to $\H(i)$ if and only if there exists a surjective homomorphism from $\G'$ to 
$\H$.

(Forwards.) Map the additional vertices in $\G'$ in the obvious fashion (by the ``identity'') to extend a surjective homomorphism
from $\G$ to $\H_i$ so that it is surjective from $\G'$ to
$\H$.

(Backwards.) 
In any surjective homomorphism $s$ from $\G'$ to $\H$, we have that $s(V(\G'(i))) \subseteq V(\H(i))$ for $i=1,\ldots k$. 
\end{proof}

\begin{corollary}
Let $\H$ be a reflexive tournament. 
If $\H$ is transitive, then {\sc Surjective $\H$-Colouring}
 is in \NL; otherwise it is \NP-complete.
\label{cor:hauptsatz}
\end{corollary}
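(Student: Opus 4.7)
The hardness part is precisely Theorem~\ref{thm:hauptsatz}, so the only remaining task is to exhibit an $\NL$ algorithm for {\sc Surjective $\H$-Colouring} when $\H$ is a transitive reflexive tournament on $n$ vertices. My plan is to identify such an $\H$ with the reflexive linear order on $[n]$, so that $(i,j) \in E(\H)$ iff $i \leq j$. A homomorphism $f : \G \to \H$ is then precisely a labelling $f : V(\G) \to [n]$ that is monotone along edges. The first observation is that any two vertices in the same strongly connected component of $\G$ are joined by directed paths in both directions, so monotonicity forces $f$ to be constant on every SCC. Consequently $f$ factors through the condensation $\G'$ of $\G$, and conversely any monotone labelling of the DAG $\G'$ lifts to a homomorphism of $\G$. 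Since the constant map to vertex $1$ is already a homomorphism (using self-loops), the entire problem is about surjectivity.

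The second step is to turn surjectivity into a combinatorial counting condition. I would show that a surjective monotone labelling $V(\G') \to [n]$ exists if and only if $|V(\G')| \geq n$. Necessity is immediate, since the image has size at most $|V(\G')|$. For sufficiency, take any topological ordering $u_1, \ldots, u_m$ of $\G'$ with $m \geq n$ and set $f(u_i) = \min(i, n)$; this assignment is monotone along every edge of $\G'$ (edges of a DAG respect any topological order, and the function $i \mapsto \min(i,n)$ is non-decreasing) and clearly surjective onto $[n]$. Hence {\sc Surjective $\H$-Colouring} reduces in logspace to the decision problem ``does $\G$ have at least $n$ strongly connected components?''

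Finally, since $n$ is a fixed constant (depending only on $\H$), I would verify that this problem lies in $\NL$. The algorithm nondeterministically guesses $n$ witness vertices $x_1, \ldots, x_n \in V(\G)$, using $O(\log |V(\G)|)$ space, and then for each of the $\binom{n}{2}$ pairs $(x_i, x_j)$ nondeterministically verifies that $x_i$ is unreachable from $x_j$ or $x_j$ is unreachable from $x_i$. Reachability is the canonical $\NL$ problem and non-reachability is in $\coNL = \NL$ by Immerman--Szelepcsényi, and a conjunction of constantly many $\NL$ verifications stays in $\NL$. Combining this with Theorem~\ref{thm:hauptsatz} gives the dichotomy. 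There is no serious obstacle to this plan; the only subtle point is phrasing ``at least $n$ SCCs'' as a guess-and-verify $\NL$ predicate rather than attempting to count SCCs outright.
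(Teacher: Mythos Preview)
Your argument is correct, and it takes a genuinely different route from the paper's. The paper dispatches the transitive case in one line by citing two external results: transitive reflexive tournaments admit a ternary median polymorphism \cite{larose2006taylor}, and structures with a majority polymorphism have their retraction problem in $\NL$ \cite{DalmauK08}; membership of {\sc Surjective $\H$-Colouring} in $\NL$ then follows from the standard reduction to {\sc $\H$-Retraction} (guess $n$ witness vertices and a bijection to $V(\H)$, then solve the resulting constant-augmented CSP). Your approach instead unpacks the transitive case completely: you identify $\H$ with the linear order, observe that homomorphisms are exactly monotone labellings and therefore factor through the condensation, and reduce the whole problem to the clean combinatorial criterion ``$\G$ has at least $n$ strongly connected components''. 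The paper's route is shorter on the page and ties the result into the algebraic framework used elsewhere in the article; your route is self-contained, avoids the polymorphism machinery entirely, and actually explains \emph{why} the problem is easy by exhibiting an explicit $\NL$ algorithm. Both are perfectly valid; yours would be the natural choice in a setting where the Dalmau--Krokhin and Larose results are not already available.
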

\begin{proof}
For the transitive case we can say that {\sc \H-Retraction} is in \NL\ from \cite{DalmauK08}, since $\H$ enjoys the ternary median operation as a polymorphism (this has been observed, inter alia, in \cite{larose2006taylor}). It follows of course that {\sc Surjective $\H$-Colouring} is in \NL\ also. The non-transitive case follows from Theorem~\ref{thm:hauptsatz}.
\end{proof}

\section{Digraphs with a most three vertices}

In this section we prove the following result.
\begin{theorem}
Let $\H$ be a partially reflexive digraph of size at most $3$. Then {\sc Surjective $\H$-Colouring} is polynomially equivalent to {\sc $\H$-Retraction}. In particular, it is always in \Ptime\ or is \NP-complete.
\label{thm:3digraph}
\end{theorem}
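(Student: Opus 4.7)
The plan is to enumerate the partially reflexive digraphs $\H$ with $|V(\H)|\leq 3$ up to isomorphism and to show, for each one, that {\sc Surjective $\H$-Colouring} and {\sc $\H$-Retraction} are polynomially equivalent. The dichotomy in $\Ptime$/\NP-complete then transfers from the latter to the former. The underlying classification of {\sc $\H$-Retraction} on size-$3$ digraphs can itself be read off from the {\sc List $\H$-Colouring} classification of~\cite{FederHellTucker}, supplemented by the reflexive-tournament classification of~\cite{larose2006taylor} and the resolution of the Feder--Vardi conjecture~\cite{FVproofBulatov,FVproofZhuk}, which collectively pin down exactly which $\H$ of size at most~$3$ yield an \NP-complete retraction problem.

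One direction of the equivalence is automatic: by the chain of reductions depicted in Figure~\ref{fig:Anthony}, polynomial-time solvability of {\sc $\H$-Retraction} implies the same for {\sc Surjective $\H$-Colouring}. So the substantive task is to transfer \NP-hardness from retraction to surjective colouring. The cases $|V(\H)|\leq 2$ are trivial. For $|V(\H)|=3$ I would first note that edge-reversal symmetry roughly halves the number of cases (it preserves the complexity of both problems), and then inspect each remaining $\H$ using the tools already developed in this paper: (i)~if $\H$ is endo-trivial on three vertices (the leading instance being $\DC^*_3$, together with mixed reflexive/irreflexive analogues) then Corollary~\ref{c-main} yields hardness directly; (ii)~if $\H$ is a reflexive tournament then Corollary~\ref{cor:hauptsatz} applies; (iii)~if $\H$ is not endo-trivial itself but retracts to an endo-trivial proper subgraph $\H_0$, then one invokes the base-case reductions Lemma~\ref{prop:main1} or Lemma~\ref{prop:main2}, iterating through Lemma~\ref{prop:main-general1} and Lemma~\ref{prop:main-general2} when necessary.

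The main obstacle is that at size $3$ the scope for auxiliary subgraph constructions is extremely limited, so case~(iii) requires careful handling: any endo-trivial proper subgraph of a three-vertex $\H$ has at most two vertices, and the gadget $\Cylm$ from Section~\ref{sec:main} is correspondingly small. One must verify in each instance that the spill condition $\mathrm{Spill}_m(\H[\H_0,\HC_0])=V(\H)$ genuinely holds, and that the argument of Lemma~\ref{prop:main1} or Lemma~\ref{prop:main2} survives in these degenerate instances. For the few remaining digraphs, such as those containing isolated vertices, reflexive double edges, or asymmetric edge patterns on three vertices, we resolve the complexity either by a small ad hoc reduction or by the observation that {\sc $\H$-Retraction} is already in $\Ptime$ on such $\H$, in which case the equivalence with {\sc Surjective $\H$-Colouring} follows automatically from Figure~\ref{fig:Anthony}. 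Assembling the finite case check in this way yields the polynomial equivalence, and with it the desired $\Ptime$/\NP-complete dichotomy.
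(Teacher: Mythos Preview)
Your overall strategy---start from the list-homomorphism classification of~\cite{FederHellTucker} and then transfer hardness to {\sc Surjective $\H$-Colouring}---matches the paper's. But the concrete mechanism you propose for the hardness transfer does not cover the cases that actually require work. Among the seven sporadic digraphs in Figure~\ref{fig:3digraphs}, only case~(e) is a reflexive tournament, so your tools~(i) and~(ii), which both presuppose reflexivity (Corollary~\ref{c-main} and Theorem~\ref{cor:Benoit} are stated for \emph{reflexive} digraphs), handle just that one case. Your tool~(iii) is vacuous at size~$3$: any proper endo-trivial subtournament $\H_0$ has at most two vertices, and {\sc $\H_0$-Retraction} is then trivially in \Ptime, so Lemma~\ref{prop:main1} reduces a tractable problem to {\sc Surjective $\H$-Colouring} and yields nothing.

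The two genuinely new cases are~(f) and~(g), both \emph{partially} reflexive (each has exactly two looped vertices), so none of the paper's reflexive-digraph machinery applies. For~(f) the paper proves a bespoke result, Lemma~\ref{lem:case-f}, via a direct analysis of the self-map digraph $\H^\H$ showing every polymorphism is essentially unary; this is not a ``small ad hoc reduction'' but a substantive argument in the spirit of Theorem~\ref{cor:Benoit} carried out without the reflexivity hypothesis. For~(g) you have the sign wrong: although {\sc List $\H$-Colouring} is \NP-complete, {\sc $\H$-Retraction} is in \Ptime, and the paper establishes this by exhibiting an explicit ternary weak near-unanimity polymorphism (Lemma~\ref{lem:case-g}) and invoking~\cite{FVproofBulatov,FVproofZhuk}. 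Your plan implicitly assumes the \NP-hard cases for retraction coincide with those for list homomorphism, which fails precisely here. Cases~(a)--(d) are dispatched by citing known results on irreflexive semicomplete digraphs~\cite{Semicomplete} and partially reflexive trees~\cite{pseudoforests,GolovachPS12}, not by the recursive machinery of Section~\ref{sec:main}.
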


We are not aware of a published classification for {\sc $\H$-Retraction}, when $\H$ is a partially reflexive digraph of size at most $3$, though we know of one for {\sc List $\H$-Colouring} from \cite{FederHellTucker}. Our starting point is therefore 
Theorem~3.1 from \cite{FederHellTucker}, and in particular the sporadic digraphs drawn in Figure~\ref{fig:3digraphs} that are precisely those for which {\sc List $\H$-Colouring} is not in \Ptime. Bearing in mind that membership in \Ptime\ for {\sc List $\H$-Colouring} gives this a fortiori for {\sc $\H$-Retraction} and {\sc Surjective $\H$-Colouring}, these sporadic digraphs are the only ones we need to consider. Note that the principal objects of study in \cite{FederHellTucker} are \emph{trigraphs} and the reference to \emph{complement} in that paper's Theorem~3.1 is to trigraph complement, which is different from the various notions of (di)graph complement.

\begin{figure}
\includegraphics[scale=0.6,trim={0 4.7cm 0 0},clip]{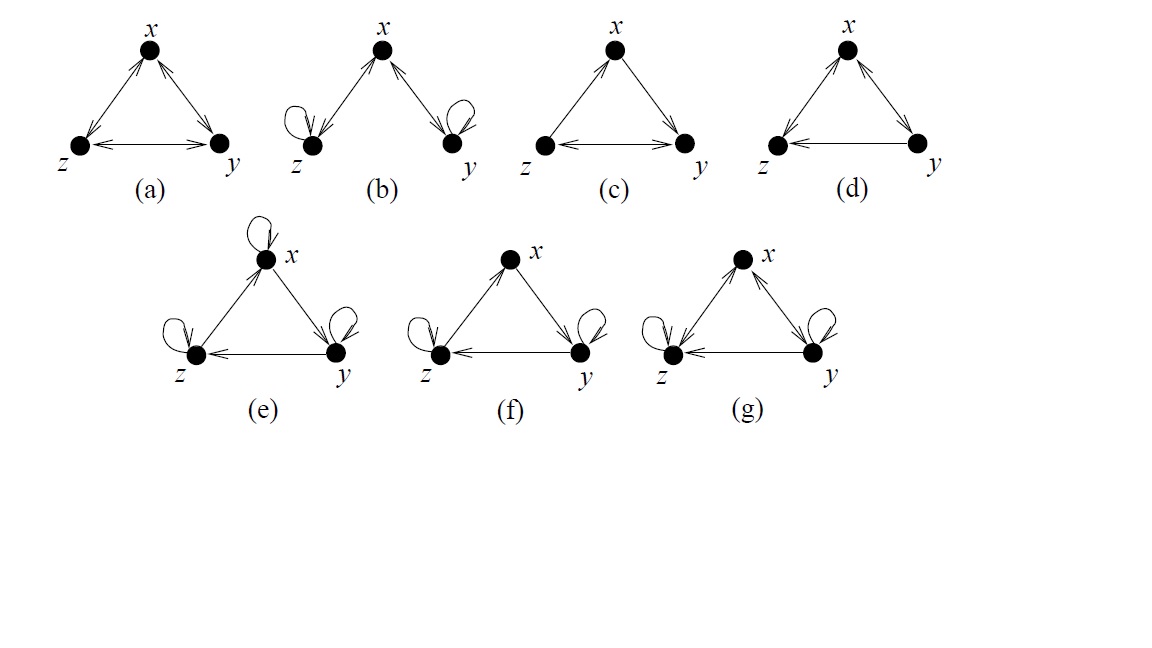}
\caption{The \NP-hard cases from Figure~1 in \cite{FederHellTucker}}
\label{fig:3digraphs}
\end{figure}
We will need two lemmas to deal with two of the cases from Figure~\ref{fig:3digraphs}.
\begin{lemma}
Let $\H$ be the digraph from Figure~\ref{fig:3digraphs} (f). All polymorphisms of $\H$ are essentially unary.
\label{lem:case-f}
\end{lemma}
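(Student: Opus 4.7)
The plan is to mirror the strategy used to prove Theorem~\ref{cor:Benoit} (``endo-trivial reflexive digraphs admit only essentially unary polymorphisms''), but carry out the structural analysis of $\widehat{\H^\H}$ by hand for this specific digraph, since $\H$ is only partially reflexive and so Lemma~\ref{lem:Benoit-lemma} does not apply directly. The key tools remain Lemma~\ref{l-gg2} (composition of edges in $\widehat{\H^\H}$) and Lemma~\ref{l-gg3} (translating $n$-ary polymorphisms of $\H$ into homomorphisms from $\H^{n-1}$ into $\widehat{\H^\H}$).

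First I would explicitly enumerate all endomorphisms of $\H$. Since $|V(\H)|=3$ there are at most $27$ self-maps to consider, and the loop pattern together with the directed edges of $\H$ eliminate most candidates immediately. I expect the outcome to be that the only endomorphisms of $\H$ are the three constant maps and the automorphisms of $\H$; in other words, $\H$ is endo-trivial in the combinatorial sense of Section~\ref{s-pre}. If some extra non-trivial endomorphisms survive the enumeration, they will have to be quarantined and handled separately in the argument below.

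Next I would compute the edge set of $\widehat{\H^\H}$ directly. The constant maps, viewed as vertices of $\widehat{\H^\H}$, inherit a copy of $\H$ itself under this construction (an edge from the constant $v$ to the constant $w$ exists iff $(v,w)\in E(\H)$). The main structural claim I would aim for is the analogue of Lemma~\ref{lem:Benoit-lemma}(3): each automorphism of $\H$ is an isolated vertex of $\widehat{\H^\H}$. To prove this we suppose for contradiction that some automorphism $\sigma$ is adjacent to an endomorphism $f\neq\sigma$; applying $\sigma^{-1}$ to both sides via Lemma~\ref{l-gg2} yields $(id_\H,g)\in E(\widehat{\H^\H})$ with $g\neq id_\H$, and a case analysis on what $g$ can be (using the explicit enumeration above) will produce an impossible edge or loop in $\H$.

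Given the picture of $\widehat{\H^\H}$ thus established, the final step reproduces the argument of Theorem~\ref{cor:Benoit}. Suppose for contradiction that some $n$-ary polymorphism $f$ of $\H$ depends on all $n$ variables, with $n\geq 2$. By Lemma~\ref{l-gg3}, $F(x_1,\ldots,x_{n-1})(y)=f(x_1,\ldots,x_{n-1},y)$ defines a homomorphism $F:\H^{n-1}\rightarrow\widehat{\H^\H}$. The main obstacle, and the delicate point of the proof, is that $\H$ from Figure~\ref{fig:3digraphs}(f) may fail to be strongly connected, so I cannot literally reuse the strong-connectivity step from Theorem~\ref{cor:Benoit}. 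Instead I would work component by component: each weak/strong component $\bC$ of $\H^{n-1}$ is mapped by $F$ into a single component of $\widehat{\H^\H}$, and by the structure theorem above this component is either a singleton automorphism (forcing $F$ to be constant on $\bC$, hence $f$ independent of its first $n-1$ variables on $\bC$) or sits inside the constants subgraph (forcing $f$ independent of its last variable on $\bC$). A short consistency check across components, using that $\H$ and hence $\H^{n-1}$ has enough edges to glue these local dependencies together into a global dependency, yields the contradiction and completes the proof.
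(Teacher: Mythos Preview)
Your plan has a genuine gap stemming from the partial reflexivity of $\H$. The digraph $\H$ in Figure~\ref{fig:3digraphs}(f) is the directed $3$-cycle $0\to 1\to 2\to 0$ with loops at $1$ and $2$ only. Because vertex $0$ is irreflexive, the currying map $F:\H^{n-1}\to\H^{\H}$ induced by an $n$-ary polymorphism does \emph{not} land in the endomorphism digraph $\widehat{\H^{\H}}$: for a tuple $X$ containing a coordinate $0$, the self-map $F(X)$ need not be an endomorphism, since $(X,X)\notin E(\H^{n-1})$. So Lemma~\ref{l-gg3} as you invoke it does not give what you need, and the structure of $\widehat{\H^{\H}}$ alone is insufficient. (Incidentally, your enumeration is also off: the constant map to $0$ is not an endomorphism, and the only automorphism is the identity; and $\H$ \emph{is} strongly connected, so that is not the obstruction.)

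The paper deals with this by working in the full self-map digraph $\H^{\H}$ and proving adjacency constraints there (its Claims~1 and~2) that apply to arbitrary self-maps, not just endomorphisms. The reflexive tuples of $\H^{n-1}$ (those with all coordinates in $\{1,2\}$) do map to loops of $\H^{\H}$, i.e.\ to endomorphisms, and form a weakly connected subgraph $\W$; the key combinatorial trick is then, for each tuple $X$ containing a $0$, to exhibit reflexive tuples $Y,Z\in\W$ with $(X,Z),(Z,Y),(Y,X)\in E(\H^{n-1})$, so that $\phi(X)$ is pinned between two endomorphisms in $\H^{\H}$ and can be identified by the adjacency claims. Your component-by-component plan does not supply this bridge between irreflexive tuples and the endomorphism part of $\H^{\H}$, and that is the missing idea.
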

\begin{proof}
Recall the self-map digraph $\G^{\G}$, with vertices the self-maps of $\G$, and an edge $(f,g) \in E(\G^\G)$ between self-maps $f$ and $g$ if and only if for every edge $(x,y) \in E(\G)$, we have that $(f(x),g(y)) \in E(\G)$. 
In the self-map digraph, the endomorphisms are precisely the looped vertices.

Consider the homomorphism $\phi: \H^n \rightarrow \H^{\H}$ induced by an $n+1$-ary polymorphism of $\H$ where $n \geq 1$; suppose for a contradiction that the polymorphism depends on all its variables. Then clearly the image of $\phi$ contains at least two elements, and at least one of these elements is not a constant map.  

\noindent{\bf Claim 0.} {\em The only loops in $\H^{\H}$ are the constant maps 1 and 2 and the identity. }

\noindent {\em Proof of Claim 0}: let $f$ be an endomorphism of $\H$. Then $f$ maps loops to loops. Since $(1,2)\in E(\H)$, $f$ can either map both to 1, both to 2 , or fix both. Clearly if it fixes both $f$ is the identity. Otherwise, if $f(1)=f(2)=i$, then $(f(0),i) , (i,f(0)) \in E(\H)$ so $f(0)=i$. 

Without fear of confusion, we denote the constant maps by 0, 1 and 2. We denote the identity map by $id$.

\noindent{\bf Claim 1.} {\em (a) There are no edges in  $\H^{\H}$ between the identity and the constant maps; (b) if $(2,f), (f,1) \in E(\H^\H)$, then $f=0$; (c) if $c$ is a constant map such that $(f,c),(c,f) \in E(\H^\H)$ then $f = c$. }

\noindent {\em Proof of Claim 1}:  $(2,f) \in E(\H^\H)$ implies $(2,f(i)) \in E(\H)$ for all $i$ so $f(i) \in \{0,2\}$ for all $i$. Similarly $(f,1)\in E(\H^\H)$ implies $f(i) \in \{0,1\}$ for all $i$ so (b) follows immediately. To prove (a), observe that any in- or out-neighbour of a constant cannot be surjective since no constant has in- or out-neighbourhood of size 3. For (c), argue again as we just did: $(1,f) \in E(\H^\H)$ means $f(i) \in \{1,2\}$ for all $i$, and $(f,1)\in E(\H^\H)$ implies $f(i) \in \{0,1\}$ for all $i$ and hence $f=1$; the proof for $c=2$ is identical.

\noindent{\bf Claim 2.} {\em If $(f,id),(id,f) \in E(\H^\H)$ then $f = id$. }

\noindent {\em Proof of Claim 2}: Assume that $(f,id),(id,f) \in E(\H^\H)$. If $i$ is a loop then $(i,i)\in E(\H)$ implies $(f(i),i),(i,f(i))\in E(\H)$  thus $f(i) = i$. Now $(0,1)\in E(\H)$ implies $(f(0),1)\in E(\H)$, and $(2,0)\in E(\H)$ implies $(2,f(0))\in E(\H)$ so $f(0)=0$.

Obviously the loops of $\H^n$ are precisely the tuples whose coordinates belong to $\{1,2\}$. Call the subdigraph induced by these vertices $\W$, and notice it is weakly connected. Then the homomorphism $\phi$ must map $\W$  onto a weakly connected digraph consisting of loops only. By Claim 1 (a) either  (i) $\W$ is entirely mapped to $\{id\}$ or (ii) $\W$ is mapped by $\phi$ to the set $\{1,2\}$.

Choose any tuple $X$ of $\H^n$ containing a coordinate equal to 0. Then there exist $Y, Z \in W$ such that $(X,Z), (Z,Y), (Y,X) \in E(\H^n)$: indeed, let $Z$ be obtained from $X$ by replacing each  0 entry by 1, fixing all other coordinates, and let $Y$ be obtained from $X$ by replacing each 0 entry by 2, fixing all other coordinates. Thus in case (i) we get that $id = \phi(Y), (\phi(Y),\phi(X)),(\phi(X),\phi(Z)) \in E(\H^\H)$ and $\phi(Z)=id$ so by Claim 2 $\phi$ maps all of $\H^n$ to $id$. In case (ii), we get that $\phi(Y)$ and $\phi(Z)$ belong to $\{1,2\}$; since $(\phi(Z),\phi(Y))\in E(\H^\H)$ we must in fact have that $(\phi(Z),\phi(Y))$ belongs to $\{(1,1),(1,2),(2,2)\}$; thus we get one of the following cases, namely $(1,\phi(X)),(\phi(X),1) \in E(\H^\H)$ or $(1,\phi(X)),(\phi(X),2) \in E(\H^\H)$ or $(2,\phi(X)),(\phi(X),2) \in E(\H^\H)$. By Claim 1 (b) and (c)  $\phi(X)$ must be one of the constant maps 0, 1 or 2. 

Thus we conclude that $\phi$ either maps all of $\H^n$ to $\{id\}$, in which case our polymorphism cannot depend on its first $n$ variables; or otherwise $\phi$ maps all of $\H^n$ to constant maps, and then the polymorphism does not depend on its last variable, also a contradiction.  

\end{proof}
\noindent An operation $t:D^k \rightarrow D$ is a {\em weak near-unanimity} operation if $t$ satisfies
\[
\begin{array}{c} 
t(x,\ldots,x)=x, \mbox{ and} \\
t(y,x,\dots,x)=t(x,y,x,\dots,x)=\dots=t(x,\dots,x,y).
\end{array}
\]
\begin{lemma}
The digraph~$\H$ from Figure~\ref{fig:3digraphs} (g) admits a weak near-unanimity polymorphism.
\label{lem:case-g}
\end{lemma}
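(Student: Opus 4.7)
The plan is to exhibit an explicit ternary polymorphism $t:V(\H)^3\to V(\H)$ of the digraph~$\H$ shown in Figure~\ref{fig:3digraphs}~(g) and then to verify by a finite case analysis both that $t$ satisfies the weak near-unanimity identities and that $t$ preserves every edge of $\H$.

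First I would fix notation: let $V(\H)=\{0,1,2\}$ (renaming the three vertices if necessary to match the figure) and write out explicitly the loop structure and the set $E(\H)$ of (directed) edges of $\H$. Having these laid out will make the later verification purely mechanical. The natural candidate for $t$ is a \emph{majority-like} ternary operation: on triples in which at least two coordinates agree, $t$ returns the majority value; on the two ``rainbow'' triples $(x,y,z)$ where $\{x,y,z\}=V(\H)$, one chooses the output carefully, using the structure of $\H$ (typically picking the unique vertex that is a ``middle element'' with respect to the partial reachability order, or the unique looped vertex that lies between the others). I expect that one of the six possible choices for the two rainbow images yields a genuine polymorphism. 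If a pure majority operation fails, the alternative is a $4$-ary WNU obtained by symmetrising; but experience with the \cite{FederHellTucker} sporadic examples suggests that a ternary $t$ suffices here.

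Next, I would verify the WNU identities $t(x,x,x)=x$ and $t(y,x,x)=t(x,y,x)=t(x,x,y)$ for all $x,y\in V(\H)$. Idempotence is immediate from the majority clause, and the three symmetric mixed values coincide by construction, so this step is only a sanity check.

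The main step, and the main obstacle, is verifying the polymorphism condition: for every triple of edges $(x_i,y_i)\in E(\H)$ with $i\in\{1,2,3\}$, we must check $(t(x_1,x_2,x_3),t(y_1,y_2,y_3))\in E(\H)$. A~priori there are $|E(\H)|^3$ triples to consider, but the analysis collapses dramatically. The triples in which at least two of the $x_i$ (and hence automatically also two corresponding $y_i$) are equal reduce, by the WNU identities, to checking that $t$ restricted to each ``diagonal'' behaves like an edge-preserving binary operation; these cases are handled uniformly. The remaining cases are those in which the $x_i$ are pairwise distinct and the $y_i$ are pairwise distinct: there are only finitely many such triples of edges (at most a handful for a three-vertex digraph), and for each one we directly check that the value prescribed for $t$ on the rainbow triple forms an edge with the corresponding rainbow image on the right. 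If a particular choice of rainbow output fails this last check, we backtrack and try another; since there are only finitely many candidate ternary operations consistent with majority, exhaustive search is finite and finishes the proof.
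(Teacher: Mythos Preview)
Your plan is sound and in fact succeeds, but it takes a different route from the paper. The paper does not attempt a majority at all: it simply exhibits a ternary weak near-unanimity operation found by computer search, namely the operation that returns $0$ on every triple except the constant ones (where it is idempotent). Your proposal instead aims for a genuine majority operation, which is strictly stronger than a weak near-unanimity, and this does work for the digraph with edge set $\{(0,0),(0,1),(0,2),(1,0),(2,0),(2,1),(2,2)\}$ as written in the paper. Taking $t$ to be majority on non-rainbow triples and setting $t(\sigma(0,1,2))=0$ for every permutation $\sigma$, one verifies the polymorphism condition as follows: since both $0$ and $2$ have out-edges to every vertex, the only case needing care is $t(\bar x)=1$; this forces $\bar x$ to be non-rainbow with at least two coordinates equal to $1$, and as the only out-neighbour of $1$ is $0$, the corresponding $y_i$ equal $0$, whence $t(\bar y)=0$ and $(1,0)\in E(\H)$. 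So your approach yields a stronger conclusion (a majority polymorphism) via an explicit hand argument, whereas the paper relies on a computer-found witness.

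Two small slips in your outline should be fixed before you write it up. First, there are six rainbow triples (all permutations of $(0,1,2)$), not two; the weak near-unanimity identities impose no relation among their images, so each could in principle be set independently, though sending all six to $0$ already works. Second, your parenthetical ``and hence automatically also two corresponding $y_i$ are equal'' is false: the edge triple $(0,1),(0,2),(1,0)$ has $\bar x=(0,0,1)$ with a repeated coordinate but $\bar y=(1,2,0)$ rainbow. Your case split must therefore handle ``$\bar x$ non-rainbow, $\bar y$ rainbow'' and its mirror as separate cases; as the argument above shows, these are harmless once the rainbow value is chosen to be $0$.
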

\begin{proof}
Let $\H$ be given over vertex-set $\{0,1,2\}$ with edge-set 
\[\{(0,0), (2,2), (0,1), (1,0), (0,2), (2,0), (2,1)\}.\]
We have found by computer the following ternary weak near-unanimity polymorphism.
\[
\begin{array}{ccccccccc}
000=0 &
001=0 &
002=0 &
010=0 &
011=0 &
012=0 &
020=0 &
021=0 &
022=0 
\\
100=0 &
101=0 &
102=0 &
110=0 &
111=1 &
112=0 &
120=0 &
121=0 &
122=0 
\\
200=0 &
201=0 &
202=0 &
210=0 &
211=0 &
212=0 &
220=0 &
221=0 &
222=2
\end{array}
\]
We have used to find this polymorphism the excellent program of Mikl\'os Mar\'oti.\footnote{See: \texttt{http://www.math.u-szeged.hu{\raise.17ex\hbox{$\scriptstyle\sim$}}maroti/applets/GraphPoly.html}}
\end{proof}
For the benefit of the interested reader, we note that a finite core possesses a weak near-unanimity polymorphism if and only if it possesses a Taylor polymorphism.
\begin{proof}[Proof of Theorem~\ref{thm:3digraph}]
As discussed, we only need to settle the complexity of {\sc $\H$-Retraction} and {\sc Surjective $\H$-Colouring} for the digraphs depicted in Figure~\ref{fig:3digraphs}, as for all other partially reflexive digraphs of size at most $3$, {\sc List $\H$-Colouring} is in P \cite{FederHellTucker}.

When $\H$ is as depicted in Figure~\ref{fig:3digraphs} (a), (c) or (d), then $\H$ is an irreflexive semicomplete digraph with more than one cycle and it is known from \cite{Semicomplete} that {\sc $\H$-Colouring} is \NP-hard, thus the same can be said for both {\sc $\H$-Retraction} and {\sc Surjective $\H$-Colouring}.

When $\H$ is the partially reflexive tree of Figure~\ref{fig:3digraphs} (b), {\sc $\H$-Retraction} is known to be \NP-hard from \cite{pseudoforests} and {\sc Surjective $\H$-Colouring} is known to be \NP-hard from \cite{GolovachPS12}.

When $\H$ is the reflexive tournament from Figure~\ref{fig:3digraphs} (e), the result follows from Corollary~\ref{c-main} or Lemma~\ref{prop:main1}. For the related case (f), we appeal to Lemma~\ref{lem:case-f} (finishing via Corollary~\ref{c-main}).

When $\H$ is as depicted in Figure~\ref{fig:3digraphs} (g), membership in \Ptime\ for {\sc $\H$-Retraction}, and therefore {\sc Surjective $\H$-Colouring}, follows from Lemma~\ref{lem:case-g}, in light of the recent groundbreaking proofs of \cite{FVproofBulatov,FVproofZhuk}.
\end{proof}

\section{Conclusion}\label{s-con}

We have given the first significant classification results for {\sc Surjective $\H$-Colouring} where~$\H$ comes from a class of digraphs (that are not graphs). 
To do this, we have developed both a novel algebraic method and a novel recursive combinatorial method. Below we discuss some directions for future research.

Let $3\mathrm{NRC}$ be the hypergraph with vertex-set $\{r,g,b\}$ and hyperedge-set $\{r,g,b\}\setminus\{(r,g,b),(r,b,g),(g,b,r),(g,r,b),(b,r,g),(b,g,r)\}$. Then {\sc $3$-No-Rainbow-Colouring} is the problem {\sc Surjective $3\mathrm{NRC}$-Colouring}, in which one looks for a surjective colouring of the vertices, such that no hyperedge is rainbow-coloured (\mbox{i.e.} uses all colours). 
We recall that the complexity of this problem is open since it arose (under a different name) in \cite{ColoringMixedHypertrees}, see also Question~3 in~\cite{SurHomSurvey}.
The {\sc Surjective $\DC^*_3$-Colouring} problem is the digraph problem most closely related to {\sc $3$-No-Rainbow-Colouring}. To explain this, 
when looking for digraphs with a similar character to $3\mathrm{NRC}$, we would insist at least that the automorphism group is transitive. This leaves just the reflexive and irreflexive directed $3$-cycles and the reflexive and irreflexive 3-cliques, that is, $3$-cycles with a double edge between every pair of vertices (admittedly, the cycles have only some of the automorphisms of the cliques). 
If~$\H$ is the reflexive $3$-clique, then $\H$-{\sc Retraction} and {\sc Surjective $\H$-Colouring} are trivial. 
If~$\H$ is the irreflexive directed $3$-cycle, then $\H$ has a majority polymorphism, which shows that $\H$-{\sc Retraction}, and thus {\sc Surjective $\H$-Colouring} (see Figure~\ref{fig:Anthony}),
can be solved in polynomial time~\cite{Semicomplete}. 
If~$\H$ is the irreflexive $3$-clique, then {\sc Surjective $\H$-Colouring} is \NP-complete, as there exists a straightforward reduction from 3-{\sc Colouring}.
Hence~$\H=\DC^*_3$ was indeed the only case for which determining the complexity of {\sc Surjective $\H$-Colouring} was not immediately obvious. 

It would be great to extend our results to larger reflexive digraph classes. Reflexive digraphs with a double edge are not endo-trivial and further fail to be endo-trivial in the worse way, since {\sc Surjective $\DC^*_2$-Colouring} is nearly trivial. Thus, our methods are likely only to be applicable to reflexive oriented digraphs, that is, those without a double edge. On the way, a natural question arising is exactly which reflexive digraphs are endo-trivial?

Finally, there is the question as to whether the assumption of endo-triviality can be weakened to that of retract-triviality in Theorem~\ref{cor:Benoit}. Endo-triviality is used right at the beginning of the proof to show that $\G^\G$ is the disjoint union of a copy of $\G$ (the constant maps) and isolated automorphisms. We do not know if retract-triviality is here sufficient.

\subsection*{Acknowledgements} We thank several reviewers for useful comments on the first draft of the extended abstract of this paper, as well as Jan Bok for fruitful discussions.

\end{document}